\let\csname equation*\endcsname\relax
\let\csname endequation*\endcsname\relax
\def\~#1{{\mbox{\sf#1}}}
\def\E{{\mathbb E}}
\newcommand{\N}{{\mathbb{N}}}
\newcommand{\R}{{\mathbb{R}}}
\def\pr{\mathrm{pr}}
\newtheorem{theorem}{Theorem}[section]
\newtheorem{lemma}[theorem]{Lemma}
\newtheorem{proposition}[theorem]{Proposition}
\newtheorem{corollary}[theorem]{Corollary}
\newenvironment{proof}[1][Proof]{\begin{trivlist}
\item[\hskip \labelsep {\bfseries #1}]}{\end{trivlist}}
\newtheorem{remark}[theorem]{Remark}
\begin{document}

\title[TV-Gaussian prior for Bayesian inverse problems]{A TV-Gaussian prior for infinite-dimensional Bayesian inverse problems and its numerical implementations}

\author{Zhewei Yao, Zixi Hu}

\address{Department of Mathematics and Zhiyuan College, Shanghai Jiao Tong University, Shanghai 200240, China}
\ead{zyaosjtu@gmail.com, hzx@sjtu.edu.cn}

\author{Jinglai Li}

\address{Institute of Natural Sciences, Department of Mathematics and MOE Key Laboratory of Scientific and Engineering Computing, Shanghai Jiao Tong University, Shanghai 200240, China}
\ead{jinglaili@sjtu.edu.cn}

\vspace{10pt}
\begin{indented}
\item[]January 2016
\end{indented}

\begin{abstract}
Many scientific and engineering problems require to perform Bayesian inferences in function spaces, where the unknowns are of infinite dimension. 
In such problems, choosing an appropriate prior distribution is an important task. 
In particular, when the function to infer is subject to sharp jumps, the commonly used Gaussian measures become unsuitable.
On the other hand, the so-called total variation (TV) prior  
can only be defined in a finite dimensional setting,   
and does not lead to a well-defined posterior measure in function spaces. 
In this work we present a TV-Gaussian (TG) prior to address such problems, where the TV term is used to detect sharp jumps of the function,
and the Gaussian distribution is used as a reference measure so that it results in a well-defined posterior measure in the function space.
We also present an efficient Markov Chain Monte Carlo (MCMC) algorithm to draw samples from the posterior distribution of the TG prior. 
With numerical examples we demonstrate the performance of the TG prior and the efficiency of the proposed MCMC algorithm.

\end{abstract}

%
\vspace{2pc}
\noindent{\it Keywords}: Bayesian inference, Gaussian measure, Markov Chain Monte Carlo, total variation.
%
%
%
%

\section{Introduction}
\label{sec:intro}

The Bayesian inference methods~\cite{gelman2014bayesian} for solving inverse problems have gained increasing popularity, largely due to their ability to quantify uncertainties in the estimation results.
The Bayesian inverse problems have been extensively studied in the finite dimensional setting~\cite{aster2013parameter,kaipio2005statistical},
and more recently, a rigorous Bayesian framework~\cite{stuart2010inverse} is developed for the inverse problems in function spaces where the unknowns are of infinite dimension. 
In existing works that perform Bayesian inferences in function spaces, Gaussian measures are widely used 
as the prior distributions. 
Such a choice is in fact well justified as theoretical studies suggest that the Gaussian measures are
 well-behaved priors:  the resulting posterior depends continuously on the data and
it can be well approximated by finite dimensional representations. 
However, in many practical problems such as image reconstructions, the true functions that one aims to infer are often subject to sharp jumps or even discontinues,
which can not be well modeled by Gaussian priors.  
In the deterministic inverse problem context, such functions are often estimated 
using the total variation (TV) regularization~\cite{rudin1992nonlinear}. 
To this end,  it is very intriguing to perform Bayesian inferences of the functions with sharp jumps, with a TV prior, the construction of which is rather straightforward in the finite dimensional setting. 
Thus to use the prior in the infinite dimensional setting, one first represents the unknown function with a finite-dimensional parametrization, for example, by discretizing the function on a pre-determined mesh grid, and then solve the resulting finite dimensional inference problem with the TV prior. 
Such a method has been successfully applied to a variety of problems~\cite{vogel1998fast,babacan2009variational}. 
A major issue of the TV prior is that, unlike the finite-dimensional Gaussian distribution, the posterior distribution of the TV prior may not converge 
to a well-defined infinite dimensional measure as the discretization dimension increases, a property that is referred to as being 
discretization variant in \cite{lassas2004can}. 
In other word, the inference results depend on the discretization dimensions, which is certainly undesirable from both practical and theoretical points of view.  
A number of non-Gaussian priors therefore have been proposed to address the issue.
For example, the work~\cite{lassas2009discretization} proposes to use the Besov priors based on wavelet expansions in such problems
 and the theoretical properties of which are further investigated in \cite{dashti2012besov}.
In another work~\cite{helin2011hierarchical}, a hierarchical Gaussian prior related to the Mumford-Shah regularization in the deterministic setting is developed. 

 We note that the priors mentioned above differ significantly from the Gaussian measures, in both theoretical properties and numerical implementations.  
As a result, many analysis techniques and numerical methods developed for the Gaussian priors can not be easily extended to these non-Gaussian priors.  
In particular, as will be discussed later, some Markov Chain Monte Carlo (MCMC) algorithms developed for Gaussian priors may not be applied directly to, for example, the Besov ones. 
To this end, efforts have been made to developing particular MCMC algorithms for those non-Gaussian priors, e.g.~\cite{vollmer2015dimension}.
As an alternative solution, in this work we propose a TV-Gaussian (TG) prior which is motivated by the elastic net 
regularization for linear regression problems~\cite{zou2005regularization},
and the hybrid regularization method in image problems~\cite{compton2012hybrid}. 
Namely, the prior includes a TV term to detect edges, and on the other hand, 
it uses Gaussian distributions as a reference measure, so that it leads to a well-defined posterior distribution in the function space.

MCMC simulations are widely used to draw samples from the posterior distribution in Bayesian inferences.
It has been known that many standard MCMC algorithms, 
such as the random walk Metropolis-Hastings,  can become arbitrarily slow as the discretization mesh of the unknown is refined~\cite{roberts2001optimal,cotter2013mcmc}. 
Namely the mixing time of an algorithm can increase to infinity as the dimension of the discretized parameter approaches to infinity,
and in this case the algorithm is said to be \emph{dimension-dependent}.
A family of dimension-independent MCMC algorithms were presented in \cite{cotter2013mcmc} by constructing a preconditioned Crank-Nicolson (pCN)
discretization of a stochastic partial differential equation that preserves the reference measure. 
There are a number of other algorithms for the infinite dimensional problems that can further improve the sampling efficiency by incorporating the data information: 
 the stochastic Newton MCMC~\cite{martin2012stochastic},
the dimension-independent likelihood-informed MCMC~\cite{cui2014dimension}, and the adaptive independence 
sampler algorithm developed in \cite{feng2015adaptive},  just to name a few.  
All the aforementioned algorithms are developed based on Gaussian priors,  
but thanks to the Gaussian reference measure, many of these algorithms, most notably the pCN method,  can be directly applied to
problems with our TG priors (some gradient based algorithms may require certain modifications). 
Moreover, by taking advantage of the special structure of the TG prior, we propose a simple splitting scheme to accelerate the pCN algorithm. 
Loosely speaking, the splitting scheme has two stages: in the first stage the sample is moved (accepted/rejected) several times 
only according to the TV term, and in the second stage it is finally accepted or rejected according to the likelihood function.  
We prove the detailed balance of the proposed splitting pCN (S-pCN) scheme and with numerical examples we demonstrate that
the method can significantly improve the mixing rate by making very simple modifications to the standard pCN method. 

To summarize, the main contributions of the work are two-fold: we propose a TV-Gaussian hybrid prior to handle unknowns that can not be well modeled
by standard Gaussian measures, and we also provide an efficient MCMC algorithm specifically designed for the proposed prior. 
The rest of the work is organized as the following. In section~\ref{sec:prior}, we present the TG priors and 
provide some results regarding the theoretical properties of it.  In section~\ref{sec:spcn}, we describe the S-pCN algorithm
to efficiently sample from the proposed TG prior.
 Numerical examples are provided in section~\ref{sec:examples} to demonstrate the 
performance of the TG prior and the efficiency of the S-pCN algorithm. 
Finally section~\ref{sec:conclusions} offers some concluding remarks.

\section{The TV-Gaussian  priors}\label{sec:prior}
We describe the TG priors in this section, starting by a general introduction of Bayesian inverse problems in function spaces. 

\subsection{Problem setup}\label{s:setup}
We consider a separable {Hilbert} space $X$ with inner product $\<\cdot,\cdot\>_X$.
 Our goal is to estimate the unknown  $u\in X$ from data $y\in \R^{m}$. The data $y$ is related to $u$ %
via a forward model,
\begin{equation}
y = G(u)+\zeta,
\end{equation}
where $G:X\rightarrow \R^{m}$ and
$\zeta$ is a $m$-dimensional zero mean Gaussian noise with covariance matrix $\Sigma$.
In particular we assume that the data $y$ is generated by applying the operator $G$ to a truth $u^\dagger\in X$ and then adding noise to it.
It is easy to see that, under this assumption, the likelihood function, i.e., the distribution of $y$ conditional on $u$ is,
\[p(y|u) \propto \exp(-\Phi^y(u)),\]
where 
\begin{equation}
\Phi^y(u) := \frac12\|G(u)-y\|^2_\Sigma=\frac12\|\Sigma^{-1/2}(G(u)-y)\|_2^2,
\end{equation}
is often referred to as the data fidelity term in deterministic inverse problems. 
In what follows, without causing any ambiguity, we shall drop the superscript $y$ in $\Phi^y$ for simplicity. 
In the Bayesian inference we assume that the prior measure of $u$ is $\mu_\pr$, and the posterior measure of $u$ 
is provided by the Radon-Nikodym (R-N) derivative:
\begin{equation} \frac{d\mu^y}{d\mu_{\pr}}(u) =\frac1Z\exp(-\Phi(u)), \label{e:bayes}
\end{equation}
where $Z$ is a normalization constant.
Eq.~\eqref{e:bayes}  can be interpreted as the Bayes' rule in the infinite dimensional setting.


%
As is mentioned in the Section~\ref{sec:intro}, it is conventionally assumed that the prior $\mu_\pr=\mu_0$ where $\mu_0 = N(0,C_0)$, i.e., a  Gaussian measure defined on $X$ with (without loss of generality) zero mean  and covariance operator $C_0$.
Note that $C_0$ is symmetric positive and of trace class.
The range of $C_0^{\frac12}$,
\[E = \{u = C_0^{\frac12} x\, |\, x\in X\}\subset X,\]
which is a Hilbert space equipped with inner product~\cite{da2006introduction},
\[\<\cdot,\cdot\>_E = \<C_0^{-\frac12}\cdot,C_0^{-\frac12}\cdot\>_X ,\]
is called the Cameron-Martin space of measure $\mu_0$. 
For the Gaussian prior, it has been proved that if $G$ satisfies the following assumptions~\cite{stuart2010inverse}:\\
\noindent\textbf{Assumptions A.1.}
\begin{enumerate}[label=\roman*]
\item for every $\epsilon>0$ there is $M=M(\epsilon)\in \R$ such that, for all $u\in X$,
\[\|G(u)\|_\Sigma \leq \exp(\epsilon\|u\|_X^2+M),
\]
\item for every $r>0$ there is $K=K(r)>0$ such that, for all $u_1,\,u_2 \in X$ with $\max\{\|u_1\|_X,\,\|u_2\|_X\}<r$,
\[\|G(u_1)-G(u_2)\|_\Sigma \leq K \|u_1-u_2\|_X,\]
\end{enumerate}
the associated functional $\Phi$ satisfies Assumptions 2.6 in \cite{stuart2010inverse}.
As a result, the posterior $\mu^y$ is a well-defined probability measure on $X$ (Theorem 4.1 in \cite{stuart2010inverse})  and it 
is Lipschitz in the data $y$ (Theorem 4.2 in \cite{stuart2010inverse}).
Moreover, under some additional assumptions, the posterior measure can be well approximated by a finite dimensional representation.

We also should note that the definition of the maximum a posteriori (MAP) estimator in finite dimensional spaces does not apply here, as the measures $\mu^y$ and $\mu_0$ are not absolutely
continuous with respect to the Lebesgue measure; instead,
the MAP estimator in $X$ is defined as the minimizer of the Onsager-Machlup functional (OMF)~\cite{dashti_map_2013,Li20151}:
\begin{equation}
 I(u) := \Phi(u)+\frac12\|u\|^2_E,\label{e:OM}
\end{equation}
over the Cameron-Martin space $E$ of $\mu_0$. Here the Cameron-Martin norm $\|\cdot\|_E$ is given by
 \[\|u\|_E = \|C_0^{-\frac12} u\|_X.\]

\subsection{A general class of the hybrid priors}
We present a general class of hybrid priors in this section. 
The idea is rather straightforward: instead of simply letting $\mu_\pr=\mu_0$, we let 
\begin{equation} 
\frac{d\mu_\pr}{d\mu_{0}}(u) \propto\exp(-R(u)), \label{e:prior}
\end{equation}
where $R(u)$ represents additional prior information (or regularization) on $u$. 
In what follows we shall refer to $R$ as the additional regularization term. 
It follows immediately that the R-N derivative of $\mu^y$ with respect to $\mu_0$ is
\begin{equation} 
\frac{d\mu^y}{d\mu_0}(u) \propto\exp(-\Phi(u)-R(u)), \label{e:bayes2}
\end{equation}
which returns to the conventional formulation with Gaussian priors.
Next we shall show that Eq.~\eqref{e:prior} is a well-behaved prior under certain assumptions on $R$:

\noindent\textbf{Assumptions A.2.} The function $R: X \to\mathbb{R}$ has the following properties.
\begin{enumerate}[label=\roman*]
\item For all $u\in X$, $R(u)$ is bounded from below, and without loss of generality we can simply assume
$R(u)\geq0$.
\item For every $r>0$ there is a $K=K(r)>0$ such that, for all $u\in X$ with $\Vert u\Vert_X<r$,
$R(u)\leq K$.

\item For every $r>0$, there is an $L(r)>0$ such that, for all $u_1,u_2\in X$  with $\max\lbrace\Vert u_1\Vert_X,\Vert u_2\Vert_X\rbrace<r$,
\[
| R(u_1)-R(u_2)|\leq L\Vert u_1-u_2\Vert_X.
\]
\end{enumerate}
Note that the assumptions above are similar to the Assumptions~(2.6) in \cite{da2006introduction}, 
with two major differences: first in A.1 (i), we require $R$ to be strictly bounded from below; secondly $R$ is independent of the data $y$ and so we do not have item (iv) in Assumption 2.6.
The requirement that $R$ is bounded from below is needed in the proof of our results regarding the finite-dimensional approximation of the posterior. 
It is easy to show that if $\Phi$ satisfy Assumptions 2.6 in and $R$ satisfy Assumptions~A.1, $\Phi+R$ satisfies Assumption~2.6 in~\cite{da2006introduction}.
As a result, $\mu^y$ is a well-defined measure on $X$ and it is Lipschitz in the data $y$, which
are stated in the following two theorems.
\begin{theorem}
Let $G$ satisfy Assumptions A.1 and $R$ satisfy Assumptions~A.2. Then $\mu^y$ given by Eq.~\eqref{e:bayes2} is a well-defined probability measure on $X$. \label{th:1}
\end{theorem}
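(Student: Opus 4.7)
The plan is to show that the composite functional $\Phi + R$ satisfies the standard hypotheses (Assumptions 2.6 of \cite{stuart2010inverse}) that guarantee the existence of a well-defined posterior measure, and then to invoke the corresponding theorem (Theorem 4.1 of \cite{stuart2010inverse}) directly. Since the paper has already stated that $\Phi$ inherits these hypotheses from Assumptions A.1, all the real work is in checking that adding $R$ preserves them.

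First, I would verify the four properties of Assumptions 2.6 for $\Phi + R$ one by one. Lower boundedness is immediate: $\Phi(u) \geq 0$ by construction and $R(u) \geq 0$ by A.2(i), so $\Phi(u) + R(u) \geq 0$. For the local upper bound on sets $\{\|u\|_X < r\}$, A.1(ii) with the choice $u_2 = 0$ shows that $\|G(u)\|_\Sigma$ is bounded on bounded sets, hence so is $\Phi(u) = \tfrac12\|G(u)-y\|_\Sigma^2$, while $R(u)$ is bounded by A.2(ii); the sum is therefore bounded on any ball. Local Lipschitz continuity in $u$ follows by combining A.1(ii) (which gives local Lipschitzness of $\Phi$ via the elementary identity $\tfrac12\|a\|^2 - \tfrac12\|b\|^2 = \tfrac12\langle a-b, a+b\rangle$ together with the local boundedness of $G$) and A.2(iii). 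Dependence on the data $y$ enters only through $\Phi$, so the Lipschitz-in-$y$ assumption reduces to the one already verified for $\Phi$ in \cite{stuart2010inverse}.

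Second, I would establish finiteness and strict positivity of the normalizing constant
\begin{equation*}
Z = \int_X \exp(-\Phi(u) - R(u))\, d\mu_0(u).
\end{equation*}
Finiteness is trivial since $\Phi, R \geq 0$ implies $\exp(-\Phi - R) \leq 1$ and $\mu_0$ is a probability measure. For positivity, by Fernique's theorem the Gaussian reference measure $\mu_0$ assigns strictly positive mass to any bounded ball $B_r = \{\|u\|_X < r\}$; on such a ball the combined bound from A.1(ii) and A.2(ii) gives $\Phi(u) + R(u) \leq M(r)$ for some finite $M(r)$, so $Z \geq e^{-M(r)} \mu_0(B_r) > 0$. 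Consequently the Radon--Nikodym derivative in \eqref{e:bayes2} defines a probability measure on $X$ that is absolutely continuous with respect to $\mu_0$, which is what is claimed.

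I do not anticipate a serious obstacle; the argument is essentially a verification that Assumptions A.2, which are tailored to be parallel to the Stuart--Dashti framework, glue onto the already-known conditions on $\Phi$. The only mildly delicate point is ensuring that the strict lower bound $R \geq 0$ (rather than merely a lower bound by a Fernique-integrable function) is genuinely used where required; this enters precisely in the bound $\exp(-\Phi - R) \leq 1$ that makes $Z$ finite without needing any further integrability hypothesis, and it is also what the authors flag as the reason for strengthening item (i) of A.2 relative to the standard assumptions.
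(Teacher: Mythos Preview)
Your proposal is correct and follows exactly the route the paper takes: the authors state that $\Phi+R$ satisfies Assumption~2.6 of \cite{stuart2010inverse} and then invoke Theorem~4.1 there, omitting all details. You have simply supplied those details (the itemwise verification and the positivity/finiteness of $Z$), which is precisely what the omitted proof amounts to; the only minor inaccuracy is your closing remark that the strict bound $R\geq 0$ is flagged for this theorem---in fact the paper says it is needed for the finite-dimensional approximation result (Theorem~\ref{th:3}), not for Theorem~\ref{th:1}.
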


\begin{theorem} Let $G$ satisfy Assumptions A.1 (i) and $R$ satisfy Assumption A.2 (i)-(ii).
Then $\mu^y$ given by Eq.~\eqref{e:bayes2} is Lipschitz in the data $y$, with respect to the Hellinger distance: if $\mu^y$ and $\mu^{y'}$ are two measures corresponding to data $y$ and $y'$ the there exists $C=C(r)$ such that, for all $y,y'$ with $\max\lbrace\Vert y\Vert_2,\Vert y'\Vert_2\rbrace<r$,
\[
d_\mathrm{Hell}(\mu^y,\mu^{y'})\leq C\Vert y-y'\Vert_\Sigma.
\]
Consequently the expectation of any polynomially bounded function $f:X\to E$ is continuous in $y$. \label{th:2}
\end{theorem}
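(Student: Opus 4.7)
The plan is to adapt the proof of Theorem~4.2 in~\cite{stuart2010inverse} to our setting. Since the posterior differs from the Gaussian reference $\mu_0$ by the modified negative log-density $\Phi^y+R$, and since A.2(i)-(ii) guarantee that $R$ is nonnegative and locally bounded, the same estimates should carry through with essentially no change.

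First I would write
\[
2d_\mathrm{Hell}(\mu^y,\mu^{y'})^2 = \int_X \Bigl(Z(y)^{-1/2}e^{-(\Phi^y(u)+R(u))/2} - Z(y')^{-1/2}e^{-(\Phi^{y'}(u)+R(u))/2}\Bigr)^2 d\mu_0(u),
\]
where $Z(y) = \int_X e^{-\Phi^y(u)-R(u)}d\mu_0(u)$, and split the integrand by adding and subtracting $Z(y)^{-1/2}e^{-(\Phi^{y'}+R)/2}$. This reduces the task to two bounds: one for a pointwise difference of exponentials, and one for the difference of the normalizers.

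For the exponential difference, the key observation is that $\Phi^y+R\geq 0$ (by A.2(i) and the definition of $\Phi^y$), so $x\mapsto e^{-x/2}$ has derivative bounded by $1/2$ on the relevant range, yielding $|e^{-(\Phi^y+R)/2}-e^{-(\Phi^{y'}+R)/2}|\leq \tfrac12|\Phi^y(u)-\Phi^{y'}(u)|$. Expanding $\|G(u)-y\|_\Sigma^2-\|G(u)-y'\|_\Sigma^2$ gives the pointwise bound $|\Phi^y(u)-\Phi^{y'}(u)|\leq \|y-y'\|_\Sigma(\|G(u)\|_\Sigma+\tfrac12(\|y\|_\Sigma+\|y'\|_\Sigma))$. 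Using A.1(i) and choosing $\epsilon$ small enough that Fernique's theorem yields $\int e^{4\epsilon\|u\|_X^2}d\mu_0<\infty$, together with $e^{-R}\leq 1$ (by A.2(i)), this integrates to $C(r)\|y-y'\|_\Sigma^2$ whenever $\max(\|y\|_2,\|y'\|_2)\leq r$. The same estimate also controls $|Z(y)-Z(y')|\leq C(r)\|y-y'\|_\Sigma$.

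The step I expect to be the most delicate is securing a uniform lower bound $Z(y)\geq \delta(r)>0$, needed to pass from $|Z(y)-Z(y')|$ to $|Z(y)^{-1/2}-Z(y')^{-1/2}|$. Restricting the defining integral to the unit ball $B=\{\|u\|_X\leq 1\}$, which has strictly positive $\mu_0$-measure since $0$ lies in the support of $\mu_0$, A.1(i) bounds $\|G(u)\|_\Sigma$ on $B$ by $e^{\epsilon+M}$ so that $\Phi^y(u)$ is bounded by a constant depending on $r$, while A.2(ii) simultaneously supplies $R(u)\leq K(1)$; together these give $Z(y)\geq e^{-C(r)}\mu_0(B)>0$. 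It is precisely here that A.2(i)-(ii) (rather than the full Assumption A.2) is used: without local boundedness of $R$ one could not control the denominator. Assembling the pieces proves the Lipschitz bound. The concluding assertion about continuity of polynomially bounded $f:X\to E$ is then routine, following from the standard inequality $\|\mathbb{E}_{\mu^y}f-\mathbb{E}_{\mu^{y'}}f\|_E\leq 2(\mathbb{E}_{\mu^y}\|f\|_E^2+\mathbb{E}_{\mu^{y'}}\|f\|_E^2)^{1/2}d_\mathrm{Hell}(\mu^y,\mu^{y'})$, combined with Fernique integrability of polynomial growth under $\mu_0$ and boundedness of the density $d\mu^y/d\mu_0$.
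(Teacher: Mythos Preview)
Your proposal is correct and is essentially the same approach as the paper's: the paper simply observes that $\Phi+R$ satisfies Assumption~2.6 of \cite{stuart2010inverse} and invokes Theorem~4.2 there directly, while you spell out the standard Hellinger splitting, the Fernique integrability step, and the lower bound on the normalizer that constitute that theorem's proof. The only difference is level of detail; the underlying argument is identical.
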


Theorems \eqref{th:1} and \eqref{th:2} are direct consequences of the fact that $\Phi+R$ satisfies Assumption~2.6 in \cite{stuart2010inverse} and so we omit the proofs here. 
Next we shall study the related issue of approximating the posterior $\mu^y$ with a measure defined in a finite dimensional space, which is of essential importance for numerical implementations of the Bayesian inferences.  
In particular we consider the following approximation: 
\begin{equation}
\frac{d\mu_{N_1,N_2}^y}{d\mu_0} = \exp (-\Phi_{N_1}(u)-R_{N_2}(u)),
\end{equation}
where $\Phi_{N_1}(u)$ is a $N_1$ dimensional approximation of $\Phi(u)$ and $R_{N_2}(u)$ is a $N_2$ dimensional approximation of $R(u)$. 
The following theorem provides the convergence of $\mu^y_{N_1,N_2}$ to $\mu^y$ with respect to Hellinger distance under certain assumptions. 
\begin{theorem} \label{th:3}
Assume that $G$ and $G_{N_1}$ satisfy Assumption A.1 (i) with constants uniform in $N_1$, and $R$ and $R_{N_2}$ satisfy Assumptions A.2 (\romannumeral1) and (\romannumeral2) 
 with constants uniform in $N_2$. Assume also that for $\forall\epsilon>0$, there exist two positive sequences $\{a_{N_1}(\epsilon)\}$ and $\{b_{N_2}(\epsilon)\}$ converging to zero, such that
$\mu_0(X_\epsilon)\geq1-\epsilon$ for $\forall\, N_1,N_2\in \N$, 
where 
\[X_\epsilon 
= \{u\in X\, |\,\vert\Phi(u)-\Phi_{N_1}(u)\vert\leq a_{N_1}(\epsilon),\,
\vert R(u)-R_{N_2}(u)\vert\leq b_{N_2}(\epsilon) \}.
\]
Then we have
$$
d_\mathrm{Hell}(\mu^y,\mu^y_{N_1,N_2})\to0 ~~~~\mathrm{as}~~~~ N_1,\,N_2\to+\infty.
$$
\end{theorem}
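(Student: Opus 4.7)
The plan is to follow the standard template for proving convergence of approximate posteriors in Hellinger distance (cf.\ Theorem~4.9 of \cite{stuart2010inverse}), adapted to accommodate two independent discretization parameters $N_1$ and $N_2$. Let
\begin{equation*}
Z = \int_X e^{-\Phi(u)-R(u)}\,d\mu_0(u), \qquad Z_{N_1,N_2} = \int_X e^{-\Phi_{N_1}(u)-R_{N_2}(u)}\,d\mu_0(u)
\end{equation*}
denote the normalizing constants, so that $d\mu^y/d\mu_0 = Z^{-1} e^{-\Phi-R}$ and similarly for $\mu^y_{N_1,N_2}$. The proof splits into three ingredients: a uniform lower bound for the normalizing constants, their convergence $Z_{N_1,N_2}\to Z$, and the Hellinger estimate itself.

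First I would establish a positive lower bound $Z, Z_{N_1,N_2} \geq c > 0$ uniform in $N_1, N_2$. Because $\Phi_{N_1}, \Phi \geq 0$ (they are squared seminorms) and $R, R_{N_2} \geq 0$ by A.2(i), it suffices to show that on some ball $\{\|u\|_X \leq r\}$ of positive $\mu_0$-mass, both $\Phi_{N_1}$ and $R_{N_2}$ are bounded above uniformly in $N_1, N_2$; this follows directly from the uniform versions of A.1(i) (which controls $\|G_{N_1}(u)\|_\Sigma$ on bounded sets) and A.2(ii). Next, I would show $|Z - Z_{N_1,N_2}| \to 0$ by decomposing the integral over $X_\epsilon$ and its complement: on $X_\epsilon$ the elementary inequality $|e^{-a} - e^{-b}| \leq |a-b|$ (for $a,b \geq 0$) together with the definition of $X_\epsilon$ yields a contribution bounded by $a_{N_1}(\epsilon) + b_{N_2}(\epsilon)$, while on $X \setminus X_\epsilon$ both integrands lie in $[0,1]$ so the contribution is at most $2\epsilon$; sending $N_1,N_2 \to \infty$ and then $\epsilon \to 0$ gives the claim.

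For the Hellinger distance itself, I would use the identity
\begin{equation*}
2\,d_\mathrm{Hell}^2(\mu^y,\mu^y_{N_1,N_2}) = \int_X \left( Z^{-1/2} e^{-(\Phi+R)/2} - Z_{N_1,N_2}^{-1/2} e^{-(\Phi_{N_1}+R_{N_2})/2} \right)^2 d\mu_0,
\end{equation*}
add and subtract $Z^{-1/2} e^{-(\Phi_{N_1}+R_{N_2})/2}$, and apply $(a+b)^2 \leq 2a^2 + 2b^2$. The first resulting term is bounded using the pointwise estimate $|e^{-s/2} - e^{-t/2}| \leq \tfrac12|s-t|$ (again valid because both exponents are nonnegative), together with the same $X_\epsilon$/complement split used above, giving a bound of order $a_{N_1}(\epsilon)^2 + b_{N_2}(\epsilon)^2 + \epsilon$ after dividing by the uniform lower bound on $Z$. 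The second term reduces to a multiple of $|Z^{-1/2} - Z_{N_1,N_2}^{-1/2}|^2$, which, in view of the uniform lower bound and $|Z - Z_{N_1,N_2}| \to 0$, also tends to zero.

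The main obstacle I expect is the book-keeping forced by having two simultaneous approximations; the definition of $X_\epsilon$ (which captures both errors at once) and the nonnegativity hypothesis A.2(i) are precisely what make the argument go through cleanly. In particular, A.2(i) is essential for both the uniform lower bound on $Z_{N_1,N_2}$ and the pointwise bound of the integrand by $1$ on $X\setminus X_\epsilon$; without it one would need further $N_2$-uniform integrability of $e^{-R_{N_2}}$ with respect to $\mu_0$, which would be much harder to verify in practice.
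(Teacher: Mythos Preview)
Your proposal is correct and follows essentially the same route as the paper: uniform positive lower bounds on $Z$ and $Z_{N_1,N_2}$ from A.1(i) and A.2(ii), convergence $Z_{N_1,N_2}\to Z$ via the $X_\epsilon$/complement split using $|e^{-a}-e^{-b}|\le\min\{1,|a-b|\}$ for $a,b\ge0$, and then the Hellinger bound via add--subtract and the same split. The only cosmetic difference is that the paper first separates the domain into $X\setminus X_\epsilon$ and $X_\epsilon$ and then adds and subtracts on $X_\epsilon$ (yielding three pieces $I_1,I_2,I_3$), whereas you add and subtract first and then split; the estimates and the role of A.2(i) are identical.
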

\begin{proof}
For every $r>0$ there is a $K_1=K_1(r)>0$ and a $K_2=K_2(r)>0$ such that, for all $u\in X$ with $\Vert u\Vert_X<r$,
$\Phi(u)\leq K_1$ and $R(u)\leq K_2$.
Let $r=\|y\|_\Sigma$ and $K(r)=K_1(r)+K_2(r)$, and we can show that the normalization constant $Z$ for $\mu^y$ satisfies, 
\[
Z\geq\int_{\lbrace\Vert u\Vert_X<r\rbrace}\exp(-K(r))\mu_0(\mathrm{d}u)=\exp(-K(r))\mu_0\lbrace\Vert u\Vert_X<r\rbrace=C.
\]
Similarly, we can show that $Z_{N_1,N_2}\geq C$ where $Z_{N1,N_2}$ is the normalization constant for $\mu^y_{N_1,N_2}$.
Since for any $a>0$ and $b>0$, $\vert\exp(-a)-\exp(-b)\vert\leq\min\lbrace1,\vert a-b\vert\rbrace$, for any $\epsilon\in(0,1)$,
\begin{align*}
\vert Z-Z_{N_1,N_2}\vert &\leq \int_X\vert\exp(-\Phi(u)-R(u))-\exp(-\Phi_{N_1}(u)-R_{N_2}(u))\vert\mathrm{d}\mu_0(u)\\
&\leq\int_{X\backslash X_\epsilon}\mu_0(\mathrm{d}u)+\int_{X_\epsilon}\vert\Phi(u)-\Phi_{N_1}(u)\vert\mu_0(\mathrm{d}u)\\
&+\int_{X_\epsilon}\vert R(u)-R_{N_2}(u)\vert\mu_0(\mathrm{d}u)\\
&\leq\epsilon+a_{N_1}(\epsilon)+b_{N_2}(\epsilon).
\end{align*}

From the definition of Hellinger distance, we have
\begin{multline}
2d_\mathrm{Hell}(\mu^y,\mu^y_{N_1,N_2})^2 = \int_X \left(\sqrt{\frac{d\mu^y}{d\mu_0}}-\sqrt{\frac{d\mu^y_{N_1,N_2}}{d\mu_0}}\right)^2 \mu_0(du)
\\=\int_X(Z^{-\frac12}\exp(-\frac12\Phi(u)-\frac12R(u))-Z_{N_1,N_2}^{-\frac12}\exp(-\frac12\Phi_{N_1}(u)-\frac12R_{N_2}(u)))^2\mu_0(\mathrm{d}u)\\
\leq I_1+I_2+I_3,
\end{multline}
where
\[
I_1=\int_{X\backslash X_\epsilon}(\frac1{\sqrt{Z}}\exp(-\frac12\Phi(u)-\frac12R(u))-\frac1{\sqrt{Z_{N_1,N_2}}}\exp(-\frac12\Phi_{N_1}(u)-\frac12R_{N_2}(u)))^2\mu_0(\mathrm{d}u),
\]
\[
I_2=\frac2Z\int_{X_\epsilon}(\exp(-\frac12\Phi(u)-\frac12R(u))-\exp(-\frac12\Phi_{N_1}(u)-\frac12R_{N_2}(u)))^2\mu_0(\mathrm{d}u),
\]
\[
I_3=2\vert Z^{-\frac12}-(Z_{N_1,N_2})^{-\frac12}\vert^2\int_{X_\epsilon}\exp(-\Phi_{N_1}(u)-R_{N_2}(u))\mu_0(\mathrm{d}u).
\]
Actually, it is easy to show
\[
I_1\leq\int_{X\backslash X_\epsilon}(2C^{-\frac12})^2\mu_0(\mathrm{d}u)\leq C\epsilon,
\]
\[
I_2\leq\frac2C\int_{X_\epsilon}(a_{N_1}(\epsilon)+b_{N_2}(\epsilon))^2\mu_0(\mathrm{d}u)\leq C(a_{N_1}(\epsilon)+b_{N_2}(\epsilon))^2,
\]
\[
I_3\leq C(Z^{-3}\wedge(Z_{N_1,N_2})^{-3})\vert Z-Z_{N_1,N_2}\vert^2\int_{X_\epsilon}\mu_0(\mathrm{d}u)=C(\epsilon+a_{N_1}(\epsilon)+b_{N_2}(\epsilon))^2.
\]
It follows immediately that
\[
2d^2_\mathrm{Hell}(\mu^y,\mu^y_{N_1,N_2})\leq C(\epsilon+\epsilon^2+(a_{N_1}(\epsilon)+b_{N_2}(\epsilon))^2+\epsilon(a_{N_1}(\epsilon)+b_{N_2}(\epsilon))),
\]
where $C$ is a constant independent of $N_1,N_2$. Let $N_1$ and $N_2$ tend to $+\infty$, yielding
\[
\lim_{N_1,N_2\to+\infty}2d_\mathrm{Hell}(\mu,\mu_{N_1,N_2})^2\leq C(\epsilon+\epsilon^2),
\]
for any $\epsilon>0$. Thus,
\[
\lim_{N_1,N_2\to+\infty}d_\mathrm{Hell}(\mu^y,\mu^y_{N_1,N_2})=0,
\]
which completes the proof. 
\end{proof}

We emphasize that the major difference 
between Theorem \ref{th:3} and Theorem 4.10 in \cite{stuart2010inverse} is that our assumption is weaker than 
that in Theorem 4.10. The reason that we can use a weaker assumption is that we require $R$ to be strictly bounded from below in Assumption A.2 (i). 
This modification of assumptions is important for our work as that the key assumption made in Theorem 4.10 may not hold in our setting~(see Remark \ref{rem:assum} for details).

In general, the approximation errors arise from two sources: representing $u$ with a finite dimensional basis,
and solving the forward model $G$ (or equivalently the functional $\Phi$) approximately.  
Next we consider a special case where we assume that for a given finite dimensional representation $u_N$ of $u$, the functional $\Phi(u_N)$ can be computed exactly; this can be understood as the idealized formulation
 of the situation where for any given $u_N$ one can choose a numerical scheme to compute the solution to a desired level of accuracy.  
In this setting, we can show that the finite dimensional approximation $\mu^y_N$ converges to $\mu$ without assuming any additional conditions:  
\begin{corollary} \label{cor:1}
Let $\{e_k\}_{k=1}^\infty$ be a complete orthonormal basis of $X$, 
\begin{equation}
u_N=\sum_{k=1}^N\<u,e_k\> e_k, \label{e:u_N}
\end{equation}
and  
\[
\frac{d\mu^y_N}{\mu_0}=\exp(-\Phi(u_N)-R(u_N)).
\]
Assume that $G$ satisfies Assumptions~A.1 and $R$ satisfies Assumptions A.2. Then 
\[
d_\mathrm{Hell}(\mu^y,\mu^y_N)\to0,~~\mathrm{as}~~N\to\infty.
\]
\end{corollary}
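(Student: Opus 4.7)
The plan is to deduce this from Theorem \ref{th:3} by choosing $N_1=N_2=N$, $\Phi_{N_1}(u):=\Phi(u_N)$, and $R_{N_2}(u):=R(u_N)$, where $u_N$ is the projection defined in Eq.~\eqref{e:u_N}. The first block of the work is to check that the hypotheses of Theorem \ref{th:3} on $G_{N_1}$ and $R_{N_2}$ transfer uniformly in $N$. Since $\|u_N\|_X\le\|u\|_X$ (projection on an orthonormal basis is a contraction), Assumption A.1 (i) for $G$ gives
\[
\|G(u_N)\|_\Sigma\le\exp(\epsilon\|u_N\|_X^2+M)\le\exp(\epsilon\|u\|_X^2+M),
\]
so $G_{N}$ satisfies A.1 (i) with constants independent of $N$. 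The same contraction argument, applied to Assumption A.2 (i)--(ii), immediately yields that $R_N(u)=R(u_N)\ge 0$ and $R_N(u)\le K(r)$ whenever $\|u\|_X<r$, again with constants uniform in $N$.

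The main obstacle is the remaining hypothesis of Theorem \ref{th:3}: exhibiting, for each $\epsilon>0$, sequences $a_N(\epsilon),b_N(\epsilon)\to 0$ and a set $X_\epsilon\subset X$ with $\mu_0(X_\epsilon)\ge 1-\epsilon$ on which the differences $|\Phi(u)-\Phi(u_N)|$ and $|R(u)-R(u_N)|$ are controlled. Pointwise on $X$ we have $\|u-u_N\|_X\to 0$ because $\{e_k\}$ is a complete orthonormal basis, and from Assumption A.1 (ii) and A.2 (iii) we obtain, for any $r>0$ and any $u$ with $\|u\|_X<r$,
\[
|\Phi(u)-\Phi(u_N)|\le K(r)\|u-u_N\|_X,\qquad |R(u)-R(u_N)|\le L(r)\|u-u_N\|_X,
\]
since both $u$ and $u_N$ lie in the ball of radius $r$. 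It therefore suffices to upgrade the pointwise convergence $\|u-u_N\|_X\to 0$ to uniform convergence on a set of large $\mu_0$-measure.

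To achieve that, I would proceed in two steps. First, Fernique's theorem (or simply the tightness of the Gaussian measure $\mu_0$) lets me pick $r=r(\epsilon)$ so that the ball $B_r:=\{\|u\|_X<r\}$ satisfies $\mu_0(B_r)\ge 1-\epsilon/2$. Second, since $\mu_0$ is a probability measure and $\|u-u_N\|_X\to 0$ $\mu_0$-a.s., Egorov's theorem produces a set $A_\epsilon\subset X$ with $\mu_0(A_\epsilon)\ge 1-\epsilon/2$ on which the convergence is uniform; that is, there exists $\delta_N(\epsilon)\to 0$ with $\|u-u_N\|_X\le\delta_N(\epsilon)$ for every $u\in A_\epsilon$. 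Setting $X_\epsilon:=A_\epsilon\cap B_r$ gives $\mu_0(X_\epsilon)\ge 1-\epsilon$, and the Lipschitz bounds above provide $a_N(\epsilon):=K(r)\delta_N(\epsilon)$ and $b_N(\epsilon):=L(r)\delta_N(\epsilon)$, both tending to zero as $N\to\infty$.

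With all hypotheses of Theorem \ref{th:3} verified, the conclusion $d_\mathrm{Hell}(\mu^y,\mu^y_N)\to 0$ follows at once. I expect the only delicate point to be the justification of uniform $X$-convergence of the projections; the Egorov argument is clean provided one first localizes to a ball of finite but large $\mu_0$-mass, which is exactly what Fernique's tail bound allows.
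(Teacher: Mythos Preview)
Your proposal is correct and follows the same overall strategy as the paper: reduce to Theorem~\ref{th:3} by first localizing to a ball of large $\mu_0$-mass and then controlling $\|u-u_N\|_X$ uniformly on a set of measure at least $1-\epsilon$, after which the local Lipschitz estimates from A.1~(ii) and A.2~(iii) finish the job. The one substantive difference lies in how the uniform control of $\|u-u_N\|_X$ is obtained. The paper argues quantitatively: since $C_0$ is trace class, $a_N:=\E\|u-u_N\|_X^2=\sum_{k>N}\E|\langle u,e_k\rangle|^2\to 0$, and Markov's inequality yields $\mu_0(\|u-u_N\|_X>\sqrt{2a_N/\epsilon})\le\epsilon/2$, giving the explicit modulus $\delta_N(\epsilon)=\sqrt{2a_N/\epsilon}$. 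Your Egorov argument is a soft substitute that uses nothing about the covariance structure of $\mu_0$---it works for any Borel probability measure on $X$---at the price of producing no rate. Both routes are entirely adequate here because Theorem~\ref{th:3} only requires $a_N(\epsilon),b_N(\epsilon)\to 0$; the paper's version would become preferable only if one wanted an explicit bound on $d_{\mathrm{Hell}}(\mu^y,\mu^y_N)$ in terms of the tail of the trace of $C_0$.
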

\begin{proof}
 Set
\[
a_N=\E\Vert u-u_N\Vert_X^2= \sum_{k=N+1}^\infty \E|\<u,e_k\>|^2. 
\]
Since $C_0$ is in the trace class, $a_N\to 0$ as $N\to\infty.$
By Markov's inequality, we have that, for any $\epsilon>0$, 
\begin{equation}
\mu_0(\lbrace\Vert u-u_N\Vert_X>\sqrt{\frac{2a_N}{\epsilon}}\rbrace)\leq \frac12\epsilon,\quad \mathrm{for}\, \forall\, N\in \N. \label{e:eq1}
\end{equation}

For the given $\epsilon$, there is a $r_\epsilon$ such that $\mu_0(\{u\in X\, |\, \Vert u\Vert_X>r_\epsilon\})<\frac12\epsilon.$ 
It is easy to show that, for any $N\in\N$, 
\[
\mu_0(\lbrace u\in X\, |\, \Vert u\Vert_X\leq r_\epsilon,\,\Vert u-u_N\Vert_X\leq\sqrt{\frac{2a_N}{\epsilon}}\rbrace)\geq 1-\epsilon.
\]
For conciseness, we define $\widetilde{X}=\lbrace u\in X\, |\, \Vert u\Vert_X\leq r_\epsilon,\,\Vert u-u_N\Vert_X\leq\sqrt{\frac{2a_N}{\epsilon}}\rbrace.$ 

Recall that $\Phi$ satisfies Assumptions 2.6 in \cite{da2006introduction} and $R$ satisfies Assumptions A.2, and so 
we know that
there are constants $L^\Phi_\epsilon, L^R_\epsilon>0$, such that for any $u\in \widetilde{X}$,
\begin{gather*}
\vert\Phi(u)-\Phi(u_N)\vert\leq L^\Phi_\epsilon\Vert u-u_N\Vert_X\leq L^\Phi_\epsilon\sqrt{\frac{2a_N}{\epsilon}},\\
\vert R(u)-R(u_N)\vert\leq L^R_\epsilon\Vert u-u_N\Vert_X\leq L^R_\epsilon\sqrt{\frac{2a_N}{\epsilon}}.
\end{gather*}
Obviously, $L^\Phi_\epsilon\sqrt{\frac{2a_N}{\epsilon}}, L^R_\epsilon\sqrt{\frac{2a_N}{\epsilon}}\to0$ as $N\to\infty$. 
As,
 \[\widetilde{X}\subset X_\epsilon=\{u\in X\, |\,\vert\Phi(u)-\Phi(u_N)\vert\leq L^\Phi_\epsilon\sqrt{\frac{2a_N}{\epsilon}}, \,
\vert R(u)-R(u_N)\vert\leq L^R_\epsilon\sqrt{\frac{2a_N}{\epsilon}}\},\]
we have $\mu_0(X_\epsilon)\geq 1-\epsilon$,
and by Theorem \ref{th:3}, 
\[
d_\mathrm{Hell}(\mu^y,\mu^y_N)\to0,~~\mathrm{as}~~N\to\infty,
\]
which completes the proof. 
\end{proof}

We reinstate that Theorem 4.10 in \cite{stuart2010inverse} can not be directly applied to this setting as its assumption may not necessarily be satisfied. 
Namely, to apply Theorem 4.10,  $\Phi+R$ must satisfy:
 for any $r>0$, there is a $K=K(r)>0$ such that for all $u$ with $\|u\|_X <r$, 
\[ |(\Phi(u)+R(u))-(\Phi(u_N)+R(u_N))| < K \phi(N),\]
where $\phi(N) \rightarrow 0$ as $N \rightarrow \infty$. 
In the following remark, we construct a simple counterexample for such a condition. 
\begin{remark} \label{rem:assum}
Let $u_N$ be given by Eq.~\eqref{e:u_N} and let $R$ satisfy $R(0)=0$.
Let $\Phi$ satisfy the assumption:
for any $r>0$, there is a $K=K(r)>0$ such that for all $u$ with $\|u\|_X <r$, 
\[ |\Phi(u)-\Phi(u_N)| < K \phi(N),\]
where $\phi(N) \rightarrow 0$ as $N \rightarrow \infty$. 
For any given $N$, we consider $u = e_{N+1}$ and obviously $u_N = 0$. It follows immediately that
\[  |(\Phi(u)+R(u))-(\Phi(u_N)+R(u_N))| > R(e_{N+1})-K\phi(N).\]

\end{remark}
It should also be noted that, in this work we only consider finite dimensional observation data for simplicity.
A more general case is to consider infinite dimensional data as well, and study
the limiting behavior of the posterior distributions under measurement refinements.  
The matter has been discussed in \cite{lassas2009discretization,dashti2012besov} for Besov priors. 

\subsection{The TV-Gaussian prior} 
Until this point, we have discussed the non-Gaussian priors in a rather general setting. 
In this section, we introduce the particular choice of the additional regularization term $R$ for the edge-detection purposes. 
First we need to specify the space of functions $X$. 
Let   $\Omega$  be a bounded open subset of  $\R^q$ where $q\in\N$, and $X$ be the Sobolev space $H^1(\Omega)$: 
\[X=H^1(\Omega) = \{ u \in L_2(\Omega)\, |\, \partial^\alpha_x u \in L_2(\Omega) \,\mathrm{for\,all}\, |\alpha|\leq1\, \},\] 
where where $\alpha=(\alpha_1,...,\alpha_q)$ and $|\alpha|=\sum_{i=1}^q \alpha_i$,
and the associated norm $\|\cdot\|_X=\|\cdot\|_{H^1}$ is 
\[\|u\|_{H^1} = \sum_{|\alpha|\leq1} \|\partial^\alpha_x u\|_{L^2(\Omega)}. \]
Naturally we choose the regularization term to be the TV seminorm,
\begin{equation}
R(u) = \lambda\| u\|_\mathrm{TV} = \lambda\int \|\nabla u\|_2 dx, \label{e:tv} 
\end{equation}
where $\lambda$ is a prescribed positive constant. 
We show that the TV term \eqref{e:tv} satisfies the required assumptions: 
\begin{lemma}
 Eq.~\eqref{e:tv} satisfies Assumptions~A.2. \label{lm:tv}
\end{lemma}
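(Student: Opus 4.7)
The plan is to verify the three items of Assumptions~A.2 in order, exploiting only the boundedness of $\Omega$, the Cauchy--Schwarz inequality, and the reverse triangle inequality for the Euclidean norm on $\R^q$. All three checks should be short; the main (minor) point to get right is to use the fact that $\Omega$ has finite Lebesgue measure, which converts the $L^1$-type TV integral into a bound in terms of the $L^2$-gradient norm, and hence into the $H^1$ norm that serves as $\|\cdot\|_X$.

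First, item (i) is immediate since $\|\nabla u\|_2\ge 0$ pointwise, so $R(u)=\lambda\int_\Omega\|\nabla u\|_2\,dx\ge 0$.

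For item (ii), I would estimate, for any $u\in H^1(\Omega)$ with $\|u\|_{H^1}<r$,
\[
R(u)=\lambda\int_\Omega \|\nabla u\|_2\,dx
\le \lambda\,|\Omega|^{1/2}\Bigl(\int_\Omega\|\nabla u\|_2^2\,dx\Bigr)^{1/2}
\le \lambda\,|\Omega|^{1/2}\,\|u\|_{H^1}
\le \lambda\,|\Omega|^{1/2}\,r,
\]
so one may take $K(r)=\lambda|\Omega|^{1/2}r$. This uses Cauchy--Schwarz in $L^2(\Omega)$ and the fact that $\|\nabla u\|_{L^2}\le \|u\|_{H^1}$ by the definition of the $H^1$ norm adopted in the paper.

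For item (iii), the same two ingredients combined with the reverse triangle inequality $\bigl|\|a\|_2-\|b\|_2\bigr|\le \|a-b\|_2$ (in $\R^q$) give, for any $u_1,u_2\in H^1(\Omega)$,
\[
|R(u_1)-R(u_2)|
\le \lambda\int_\Omega\bigl|\|\nabla u_1\|_2-\|\nabla u_2\|_2\bigr|\,dx
\le \lambda\int_\Omega\|\nabla(u_1-u_2)\|_2\,dx
\le \lambda\,|\Omega|^{1/2}\,\|u_1-u_2\|_{H^1}.
\]
Hence $L(r)=\lambda|\Omega|^{1/2}$ works (in fact uniformly in $r$). The only thing worth emphasizing is that boundedness of $\Omega$ is essential for all three estimates; beyond that, there is no real obstacle, and the lemma follows by combining the three displays above.
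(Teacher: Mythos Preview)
Your proof is correct and follows essentially the same route as the paper's: item (i) is immediate, and items (ii) and (iii) both reduce to the single inequality $\|u\|_{\mathrm{TV}}\le C\|u\|_{H^1}$ (applied respectively to $u$ and to $u_1-u_2$). The paper simply invokes this bound with an unspecified constant $C$, whereas you supply the explicit value $C=|\Omega|^{1/2}$ via Cauchy--Schwarz and spell out the pointwise reverse triangle inequality behind the step $|R(u_1)-R(u_2)|\le\lambda\|u_1-u_2\|_{\mathrm{TV}}$; otherwise the arguments coincide.
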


\begin{proof}
\noindent(\romannumeral1): The assumption is trivially satisfied. 

\noindent(\romannumeral2): For all $u\in X$ with $\Vert u\Vert_X<r$, there exists a constant $C>0$ such that 
\[R(u) =\lambda\Vert u\Vert_\mathrm{TV}
\leq \lambda C\Vert u\Vert_{X}
\leq \lambda Cr
=K(r).
\]

\noindent(\romannumeral3): For every $r>0$ and all $u_1,u_2\in X$, there is a constant $C>0$ such that  
\[
\vert R(u_1)-R(u_2)\vert
\leq \lambda\Vert (u_1-u_2)\Vert_\mathrm{TV}
\leq \lambda C\Vert u_1-u_2\Vert_X.\]
\end{proof}

It follows directly from Lemma~\ref{lm:tv} that Theorems \ref{th:1} and \ref{th:2} and Corollary~\ref{cor:1} hold for the prior~\eqref{e:tv}.
Following the same procedure of \cite{dashti_map_2013}, we can show that the MAP estimator for the TV-Gaussian prior is the minimizer of
\begin{equation}
 I(u) := \Phi(u)+\lambda \| u\|_\mathrm{TV}+\frac12\|u\|^2_E.\label{e:OM2}
\end{equation}

\section{Splitting pCN}\label{sec:spcn}
We now discuss the numerical implementation of the Bayesian inference with the TG priors. 
Typically the Bayesian inference is implemented with MCMC algorithms, 
and the pCN algorithms, a family of dimension-independent MCMC schemes, have been proposed to 
draw samples in the function spaces. 
As is mentioned in section \ref{sec:intro}, thanks to the special structure of the TG prior, the pCN algorithms can be applied directly to problems with this prior, requiring no modifications. 
Nevertheless, in this section, we propose to further improve the sampling efficiency by making very simple modifications to the pCN algorithms. 
The idea is based on the following two observations on our TG prior: 
\begin{enumerate}
\item The TV term $R$ is more sensitive to small local fluctuations of $u$ than the data fidelity term $\Phi$. 
\item The TV term $R$ can be computed much more efficiently than the data fidelity term $\Phi$.
In fact, evaluating $\Phi$ requires to simulate forward model $G$ which is often governed by some computationally intensive partial differential equations (PDE). 
\end{enumerate}
As a result of the first observation, to achieve a reasonable acceptance probability, for example, $20\%$, 
one has to use a very small stepsize in the pCN algorithm, resulting in poor mixing. 
On the other hand, the restriction of the stepsize is mainly due to the TV term. 
To address the issue, we introduce a splitting scheme to the standard pCN algorithm; namely, we perform the pCN in two stages: one for the TV term and one for the data fidelity term.   
The intuition behind the method is rather straightforward: the slowly varying and computationally intensive data fidelity term $\Phi$ is evaluated less frequently than the fast-varying
and computationally efficient TV term $R$. A particle is moved and accepted or rejected $k$ times
according to the TV term. The sample resulting from the $k$ short step
moves is accepted or rejected using a Metropolis criterion
based on $\Phi$  after the $k$ short-range moves.
Specifically, suppose the current state is $u_\mathrm{current}$, the splitting pCN (S-pCN) 
proceeds as follows: 
\begin{enumerate}
\item let $v_0 = u_\mathrm{current}$.
\item for $i=1$ to $k$ perform the following iteration:
\begin{enumerate}
\item propose   $v_\mathrm{prop} = \sqrt{1-\beta^2} v_{i-1} + \beta w$, where $w \sim \mu_0$; \label{st:a} 
\item 
let $v_i =v_\mathrm{prop}$ with probability; 
\begin{equation}
\mathrm{acc}_{R}(v_\mathrm{prop},v_{i-1}) = \min \{1, \exp[ -(R(v_\mathrm{prop}) - R(v_{i-1}))]\};\label{e:accR}
\end{equation} 
and let $v_i = v_{i-1}$ with probability $1-\mathrm{acc}_{R}(v_\mathrm{prop},v_{i-1}) $.
\item return to step \eqref{st:a};
\end{enumerate}
\item 
let $u_\mathrm{next}=v_k$ with probability 
\begin{equation}
\mathrm{acc}_{\Phi}(v_k,u_\mathrm{current})  = \min \{1, \exp[-( \Phi(v_k) - \Phi(u_\mathrm{current}))]\},\label{e:accphi}
\end{equation}
and $u_\mathrm{next} = u_\mathrm{current}$ with probability $1-\mathrm{acc}_{\Phi}(v,u_\mathrm{current})$.
\end{enumerate}
We note the similarity of this splitting algorithm and the approximation-accelerated MCMC algorithms such as \cite{christen2005markov,efendiev2006preconditioning}, and especially the surrogate transition method in \cite{liu2008monte}. 
However, our algorithm simply splits the data fidelity term and the TV term, without requiring any approximations or surrogate models. 
In fact, if a surrogate is available for the forward model, it can be naturally incorporated in to the S-pCN algorithm to further improve the sampling efficiency. 
Nevertheless, constructing and implementing such a surrogate model is not in the scope of this work.  

A possible interpretation of the S-pCN algorithm is that step 2 is responsible of generating a new proposal for the target distribution $\mu^y$,
which is rejected or accepted according to the normal MCMC rule in step 3. 
Under this interpretation,  we need to show that the detailed balance condition is satisfied, which is stated by the following proposition. 
\begin{proposition}
Let $q(u,dv)$ is the proposal kernel corresponding to the $v_k$ proposed in step 2 given $u$, and $Q$ is its transition kernel given by,
  \[
Q(u,dv)=a(u,v_k)q(u,dv)+\delta_u(dv)(1-\int \mathrm{acc}_\Phi(z,u)q(u,dz)),
\] 
where $\mathrm{acc}_\Phi(v_k,u)$ is given by Eq.~\eqref{e:accphi}
and $\delta_u$ is the point mass at $u$. We then have 
\[ \mu^y(du)Q(u,dv) = \mu^y(dv)Q(v_k,du).\]
\end{proposition}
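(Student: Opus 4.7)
The plan is to realize the transition kernel $Q$ as the composition of an inner kernel $q$ that is reversible with respect to an auxiliary measure $\nu$ defined by $d\nu/d\mu_0 \propto \exp(-R(u))$ (well-defined since $R\ge 0$ and $R$ is bounded on bounded sets by Assumptions A.2), followed by an outer Metropolis--Hastings correction using $\mathrm{acc}_\Phi$. Since $d\mu^y/d\nu \propto \exp(-\Phi)$, the outer step will lift reversibility from $\nu$ to $\mu^y$ by the usual symmetrization of the MH ratio.

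First, I would show that one iteration of the inner loop (steps 2(a)--2(c)) defines a $\nu$-reversible Markov kernel $P(u,dv)$. The pCN proposal $v_{\mathrm{prop}} = \sqrt{1-\beta^2}\,u + \beta w$ with $w\sim\mu_0$ is $\mu_0$-reversible: when $u\sim\mu_0$, the pair $(u,v_{\mathrm{prop}})$ is jointly centred Gaussian with identical marginals $\mu_0$ and symmetric cross-covariance $\sqrt{1-\beta^2}\,C_0$, hence invariant under the swap. Combining this with the MH accept/reject using $\mathrm{acc}_R$ of \eqref{e:accR} is exactly the pCN construction of \cite{cotter2013mcmc}, applied to the target $\nu$ instead of $\mu^y$, and the conclusion is that $\nu(du)P(u,dv) = \nu(dv)P(v,du)$.

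Next, since $q(u,dv)$ is the $k$-fold composition $P^k(u,dv)$ of the same $\nu$-reversible kernel, $q$ is itself $\nu$-reversible: the identity $\nu(du)P(u,dw) = \nu(dw)P(w,du)$ can be combined with an integration against $P(w,dv)$ on both sides to yield $\nu(du)P^2(u,dv) = \nu(dv)P^2(v,du)$, and induction on $k$ finishes. Then, using $d\mu^y \propto e^{-\Phi}\,d\nu$, multiplying the $\nu$-detailed balance of $q$ by $\exp(-\Phi(u))\,\mathrm{acc}_\Phi(v,u)$ gives
\[
\mu^y(du)\,\mathrm{acc}_\Phi(v,u)\,q(u,dv) \;\propto\; \min\{e^{-\Phi(u)},\,e^{-\Phi(v)}\}\,\nu(dv)\,q(v,du),
\]
whose right-hand side is manifestly symmetric under $(u,v)\leftrightarrow(v,u)$. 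This establishes detailed balance for the off-diagonal, continuous part of $Q$.

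Finally, the rejection contribution $\delta_u(dv)\bigl(1-\int \mathrm{acc}_\Phi(z,u)q(u,dz)\bigr)$ is concentrated on the diagonal $\{u=v\}$ and thus satisfies $\mu^y(du)\,\delta_u(dv)\,r(u) = \mu^y(dv)\,\delta_v(du)\,r(v)$ trivially, so summing the two contributions yields $\mu^y(du)Q(u,dv) = \mu^y(dv)Q(v,du)$. The main subtlety, rather than an obstacle, is handling the $\mu_0$-reversibility of the pCN proposal without recourse to a Lebesgue density in the infinite-dimensional setting; this is standard and can be cited from \cite{cotter2013mcmc}. The rest is the familiar Metropolis symmetrization argument, now applied at two nested levels.
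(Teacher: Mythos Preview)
Your argument is correct and is precisely the two-level Metropolis symmetrization that the paper has in mind: the paper does not write out a proof at all but simply remarks that it ``is similar to the ergodicity proof of the preconditioned MCMC algorithm in \cite{efendiev2006preconditioning}, and thus is omitted here.'' Your decomposition---inner pCN step is $\nu$-reversible with $d\nu/d\mu_0\propto e^{-R}$, the $k$-fold composition inherits $\nu$-reversibility, and the outer accept/reject with $\mathrm{acc}_\Phi$ is then the standard MH correction for the target $\mu^y$ since $d\mu^y/d\nu\propto e^{-\Phi}$---is exactly that argument spelled out, so there is nothing to add.
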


The proof of the proposition is similar to the ergodicity proof of the preconditioned MCMC algorithm in \cite{efendiev2006preconditioning},
and thus is omitted here.

\section{Numerical examples}\label{sec:examples}

\subsection{A signal denoising problem}
\begin{figure}
\centerline{\includegraphics[width=.5\textwidth]{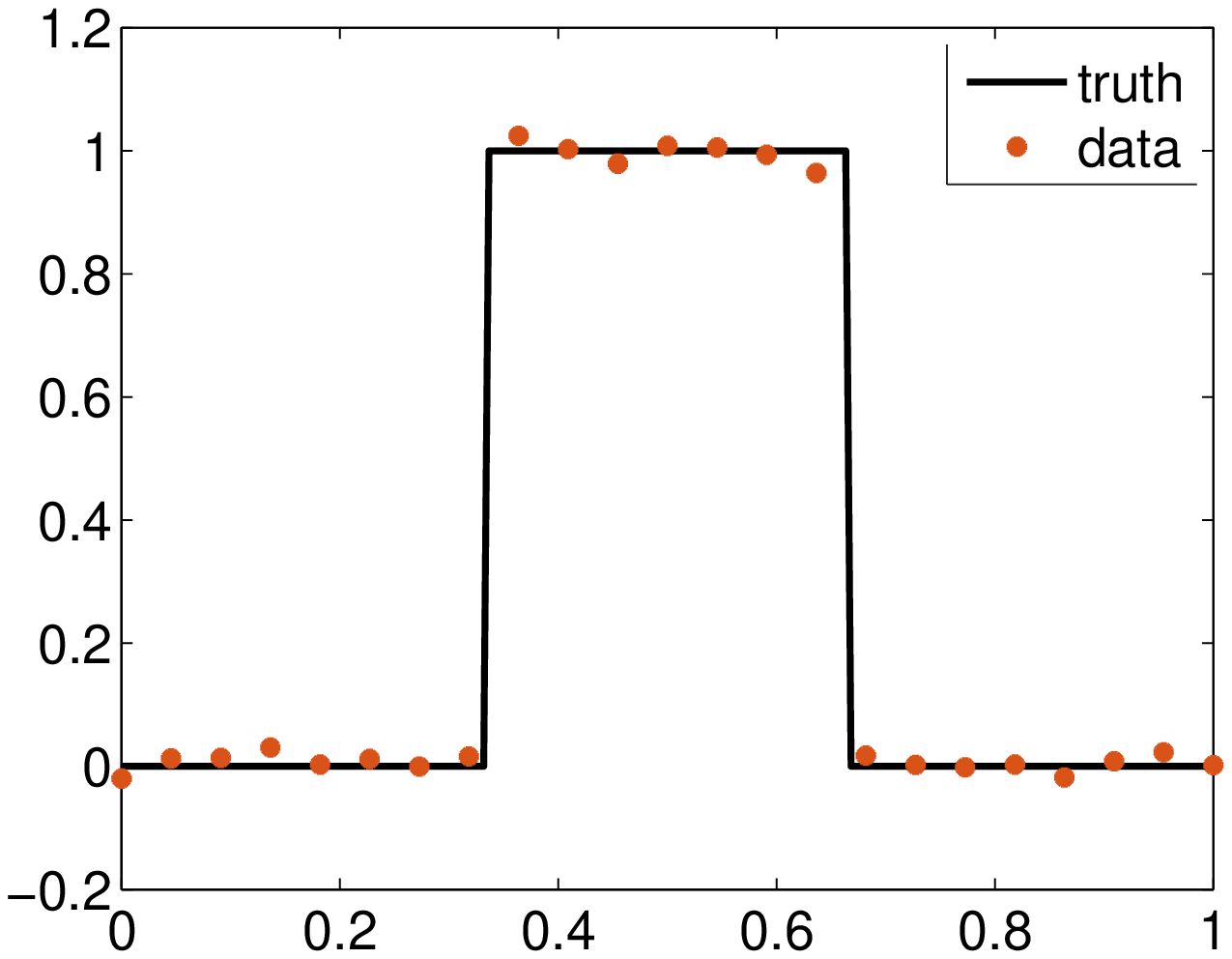}
\includegraphics[width=.5\textwidth]{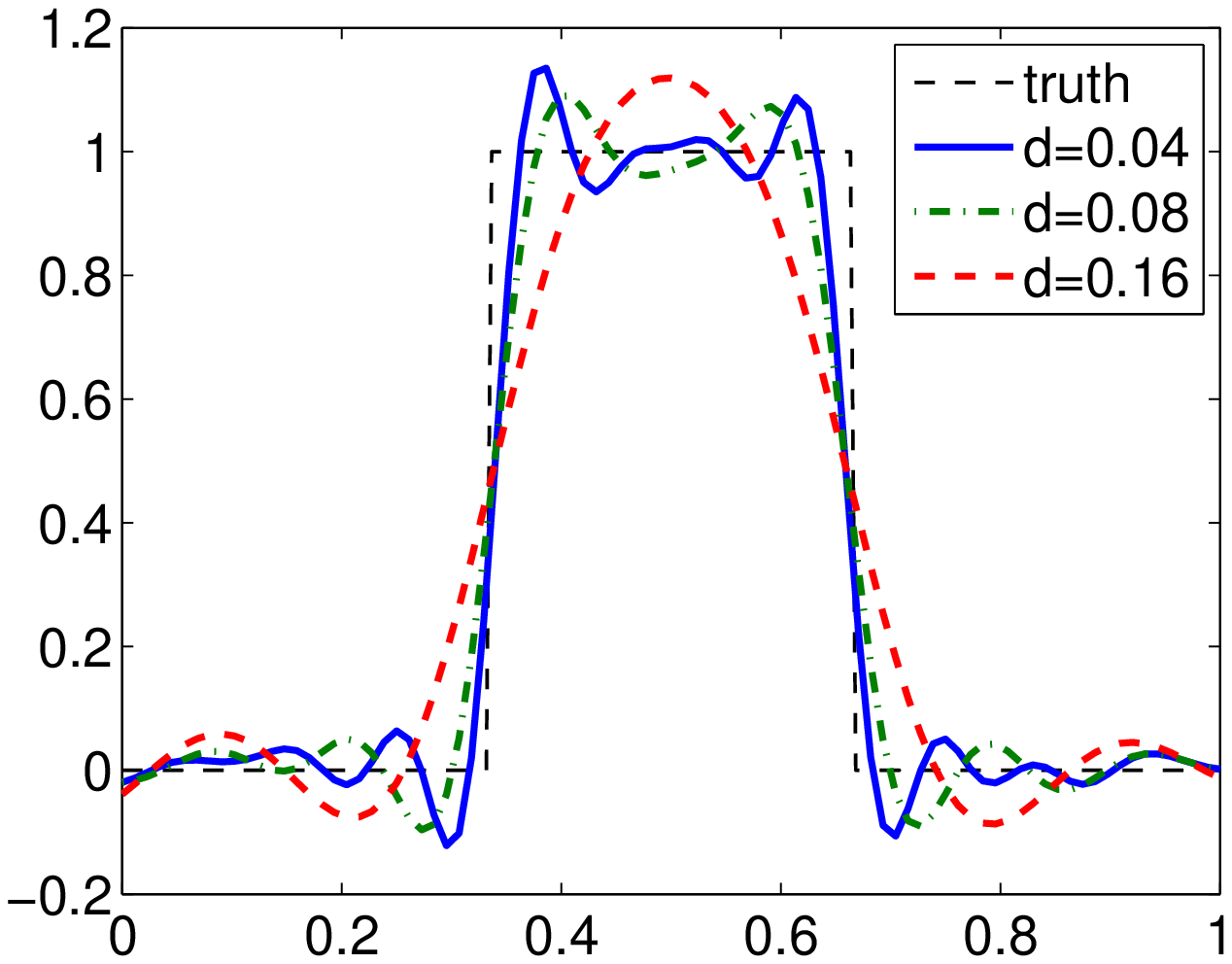}}
\caption{Left: the true signal (solid line) and the observed data points (dots).
Right: the posterior mean of the Gaussian prior with $d=0.04,\,0.08,\,0.16$.}\label{fig:signal}
\end{figure}
First we test the proposed prior on a signal denoising example. 
Namely, suppose we have a piecewise constant signal,
\[
u(t) = \bigg\{\begin{array}{cl}
0, & 0\leq t<1/3;  \\
1, & 1/3\leq t<2/3;  \\
0, & 2/3\leq t\leq1.  \end{array} 
\]
We observe $23$ data points equally distributed on $[0,\,1]$ each with independent Gaussian observation noise $N(0,0.02^2)$. 
The true signal and the data points are shown in Fig.~\ref{fig:signal} (left).
We first test the Gaussian priors,
and specifically we choose the  prior to be
 a zero-mean Gaussian measure with a squared exponential covariance:
\begin{equation}
K(t_1,t_2) = \gamma\exp\left[ -{\frac12\left(\frac{t_1-t_2}{d}\right)^2}\right].\label{e:cov}
\end{equation}
Note that, with Gaussian prior, the posterior mean can be computed analytically, 
and here we computed it with $\gamma=0.1$ and $d=0.04,\,0.08$ and $0.16$. We plot the results in Fig.~\ref{fig:signal} (right),
which clearly demonstrate that the Gaussian priors perform poorly for this function.

Next we compare the performance of our TG prior and that of the TV prior.
As has been discussed,  for the TV prior, we have to use a finite dimensional formulation, and assume the density of the prior is 
\[ p(u_N) \propto \exp(-\lambda\|u_N\|_\mathrm{TV}),\]
where the regularization parameter is taken to be $\lambda=500$.
For the TG prior, we need to specify both the TV term and the Gaussian measure. 
For the TV term we set $\lambda = 500$ and 
for the Gaussian reference measure, we assume the covariance is again given by Eq.~\eqref{e:cov}, 
with $d=0.02$ and  $\gamma=0.1$. 
With either prior, we perform the inference using three different numbers of grid points $N=89,\,177,\,353$, 
and so that we can see if the inference results depend on the dimensionality of the problem. 
As this example does not involve computationally demanding forward model, we choose to sample the posterior 
with the standard pCN algorithm. 
For the TV prior, we draw $10^{8}$ samples for the cases $N=89$ and $N=177$ and $5\times 10^{8}$ samples for $N=353$.
We use such large numbers of samples to ensure reliable estimates of the inference. 
We plot the posterior mean of the TV prior in Fig.~\ref{fig:cm} (left), and we can
see (especially from the zoomed-in plots) that the results of the different numbers of grid points depart evidently from each other,
which indicates that the inference results of the TV prior depends on the discretization dimensionality.
These results are consistent with the findings reported in~\cite{lassas2009discretization}. 
Next we draw $10^{8}$ samples from the posterior with the TG prior, for all the three numbers of grid points,
 and plot the resulting posterior means in Fig.~\ref{fig:cm} (right). 
 We can see that the results for the three different grid numbers look almost identical, suggesting 
 that the results with the TG prior are independent of discretization dimensionality. 
 We also can see that, compared to the standard Gaussian priors,  the TG prior can much better detect the sharp jumps of the signal, thanks to the presence of the TV term.
 As is mentioned in the introduction, a major advantage of the Bayesian method is that it can quantify the uncertainty in the estimates. 
 To show this, in Fig.~\ref{fig:ci-samp} (left), we plot the $95\%$ pointwise credible interval (CI) of unknown,
 and in Fig.~\ref{fig:ci-samp}  we plots $10$ random samples drawn from the prior and the posterior.  
 These plots demonstrate the difference in the Bayesian and the deterministic methods for solving inverse problems. 
 
Finally we want to exam how the inference results depend on the regularization parameters, i.e., $\lambda$ and $\gamma$, for the TG prior. 
Specifically we perform the inferences with different values of $\lambda$ and $\gamma$, each with $10^8$ samples,
and show the results in Fig~\ref{fig:hp}.  
 In Fig.~\ref{fig:hp} (left), we show the posterior means computed with $\gamma =0.1$ and $\lambda = 300,\,500,\,700$;
 in Fig.~\ref{fig:hp} (right), we show the posterior means computed with $\lambda=500$ and 
 $\gamma=0.05,\,0.1,\,0.2$.
 Both figures suggest that the inference results are rather robust with respect to the values of these parameters. 

Note, however, that this is only a simply toy problem, and in practice, the problems can be much more difficult: 
the observation noise can be much stronger and the forward operator can be much more ill-posed, etc. 
Thus, to further evaluate the performance of the TG prior, we consider a more complicated problem in the next example. 

\begin{figure}
\centerline{\includegraphics[width=.5\textwidth]{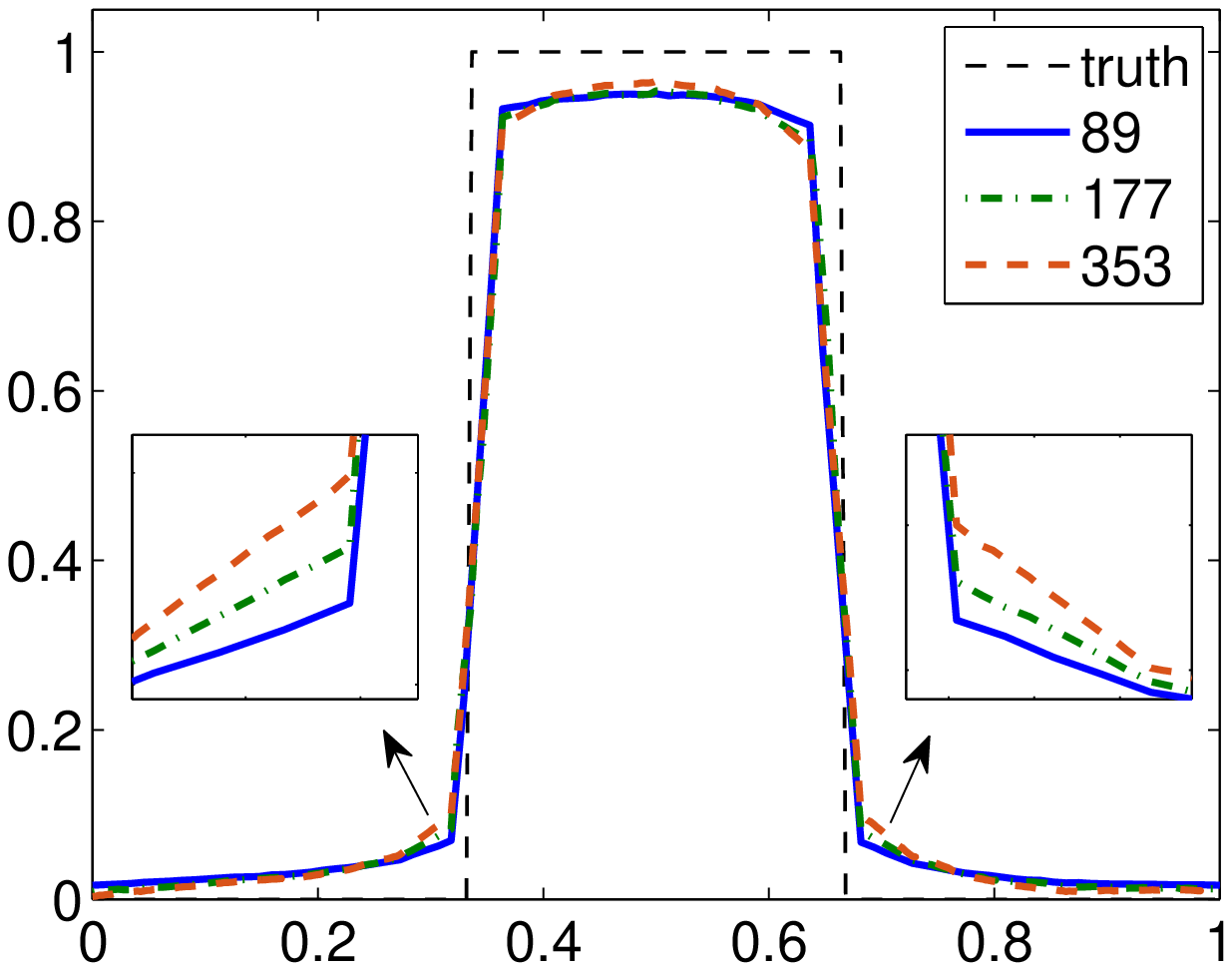} 
\includegraphics[width=.5\textwidth]{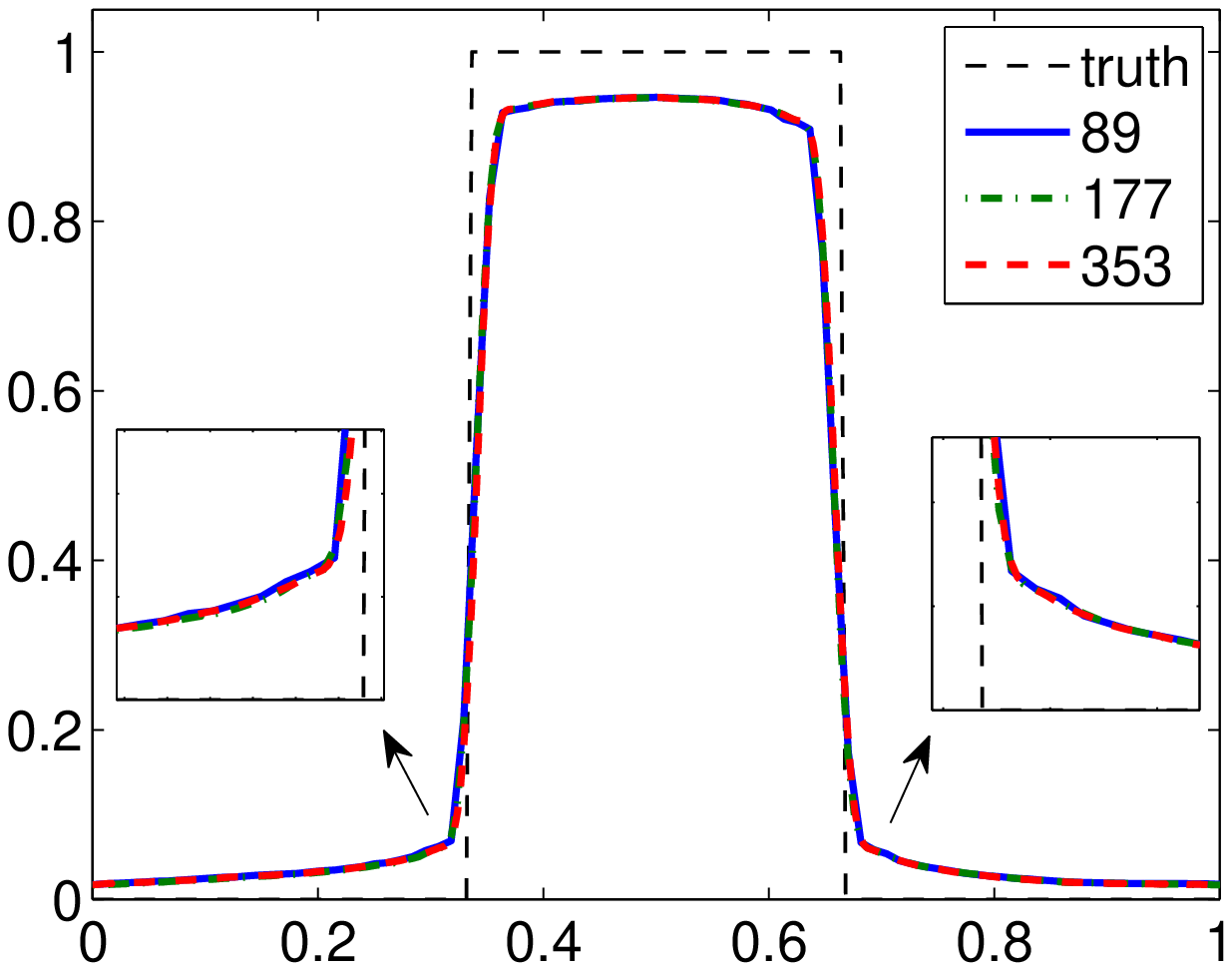}} 
\caption{Left: the posterior mean results with the TV prior, computed with grid point numbers $N=89,\,177,\,353$.
Right: the posterior mean results with the TG prior, computed with grid point numbers $N=89,\,177,\,353$.
In both figures, the insets show the zoom-in view near the jumps. }\label{fig:cm}
\end{figure}

\begin{figure}
\centerline{\includegraphics[width=.5\textwidth]{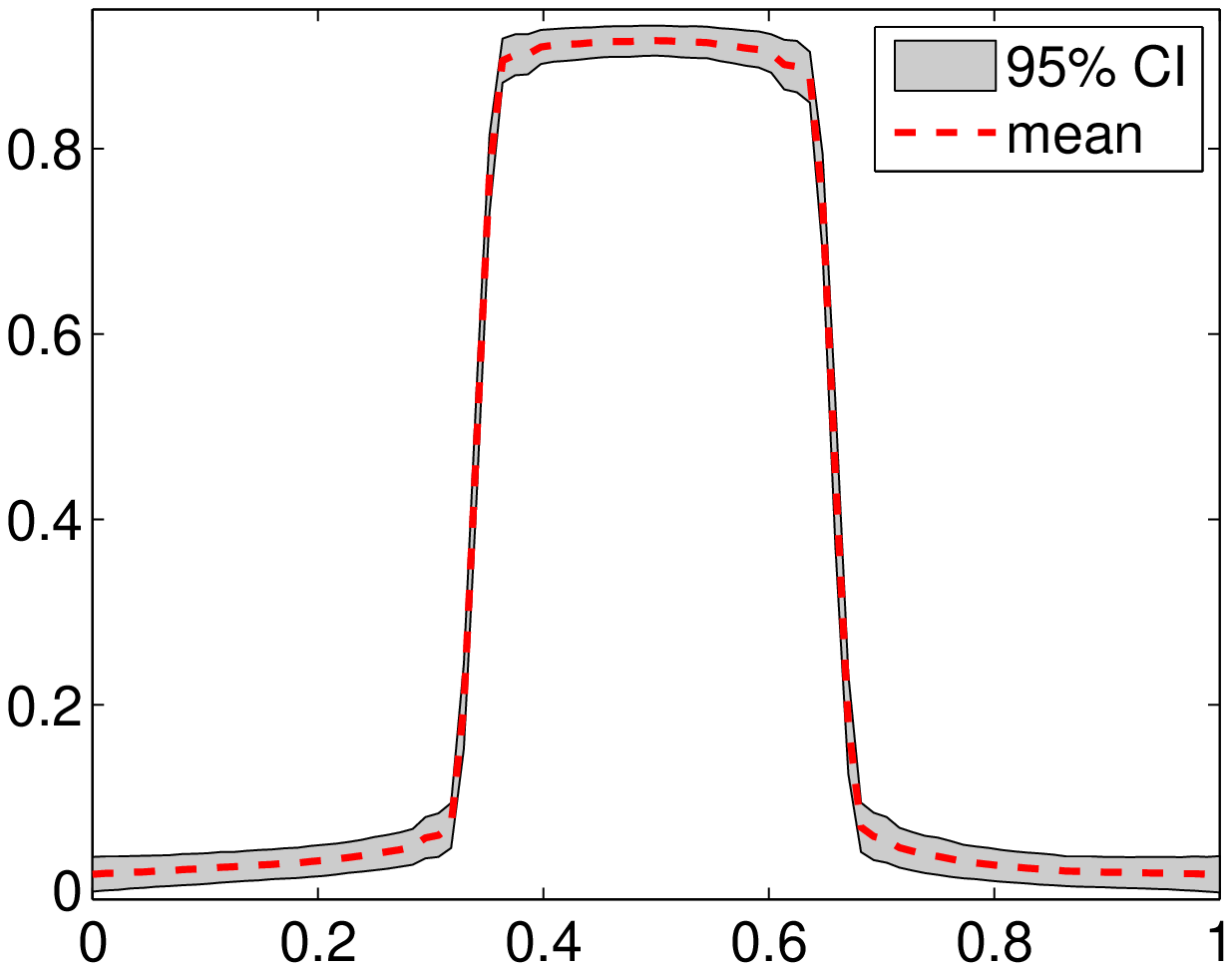} 
\includegraphics[width=.5\textwidth]{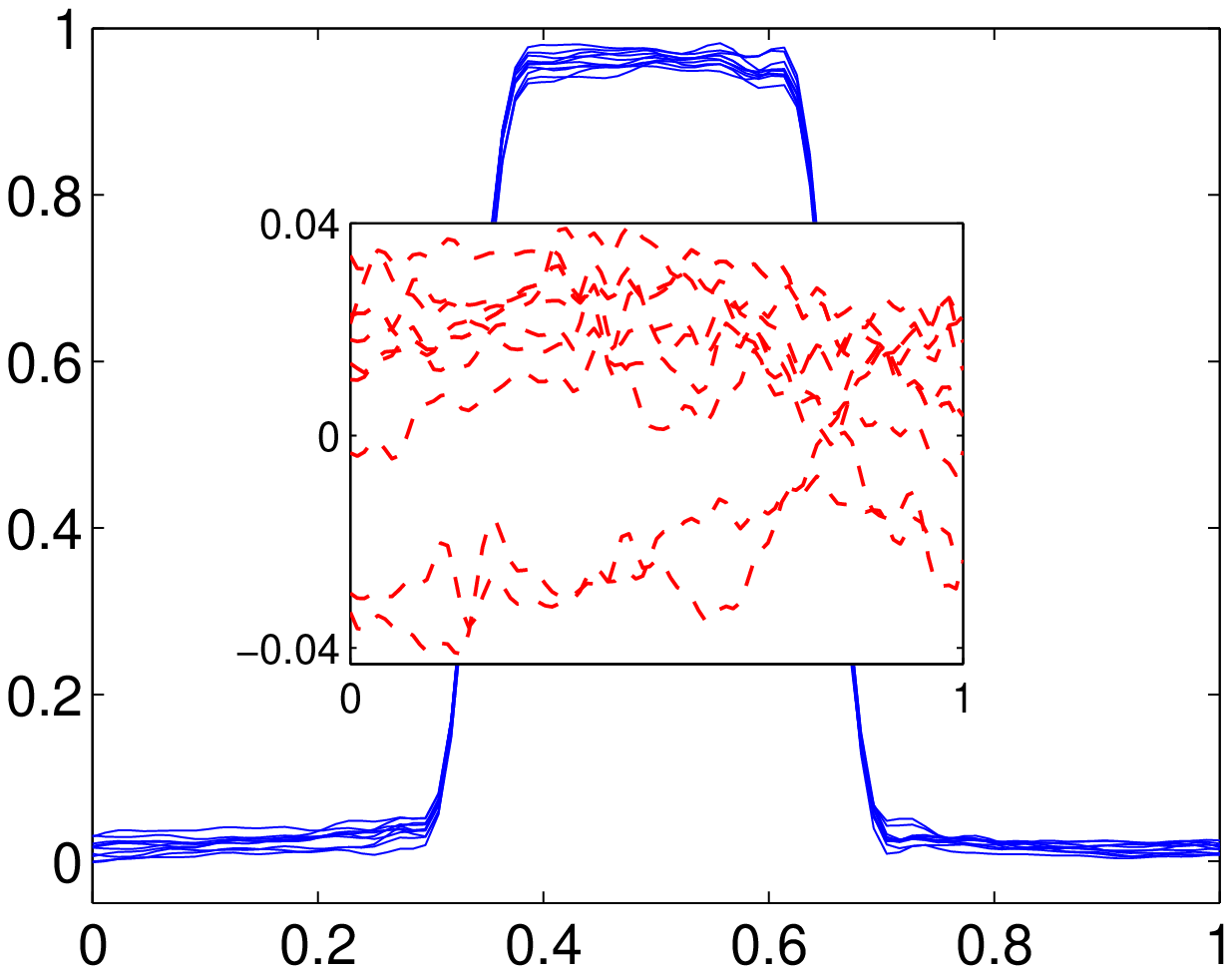}} 
\caption{Left: the pointwise $95\%$ credible interval (CI).
Right: 10 random samples drawn from the the posterior (solid lines) and from the prior (inset, dashed lines). }\label{fig:ci-samp}
\end{figure}

\begin{figure}
\centerline{\includegraphics[width=.5\textwidth]{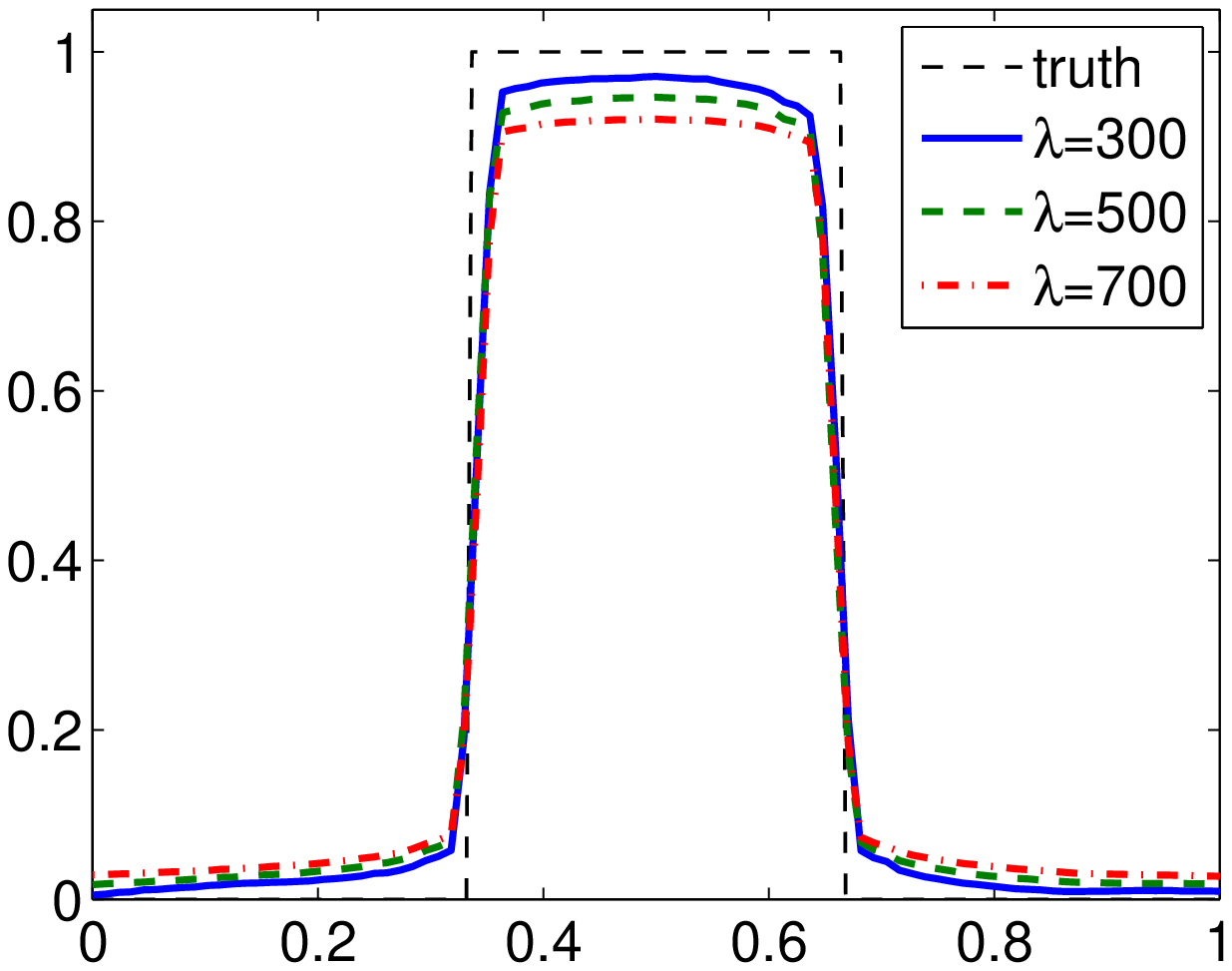} 
\includegraphics[width=.5\textwidth]{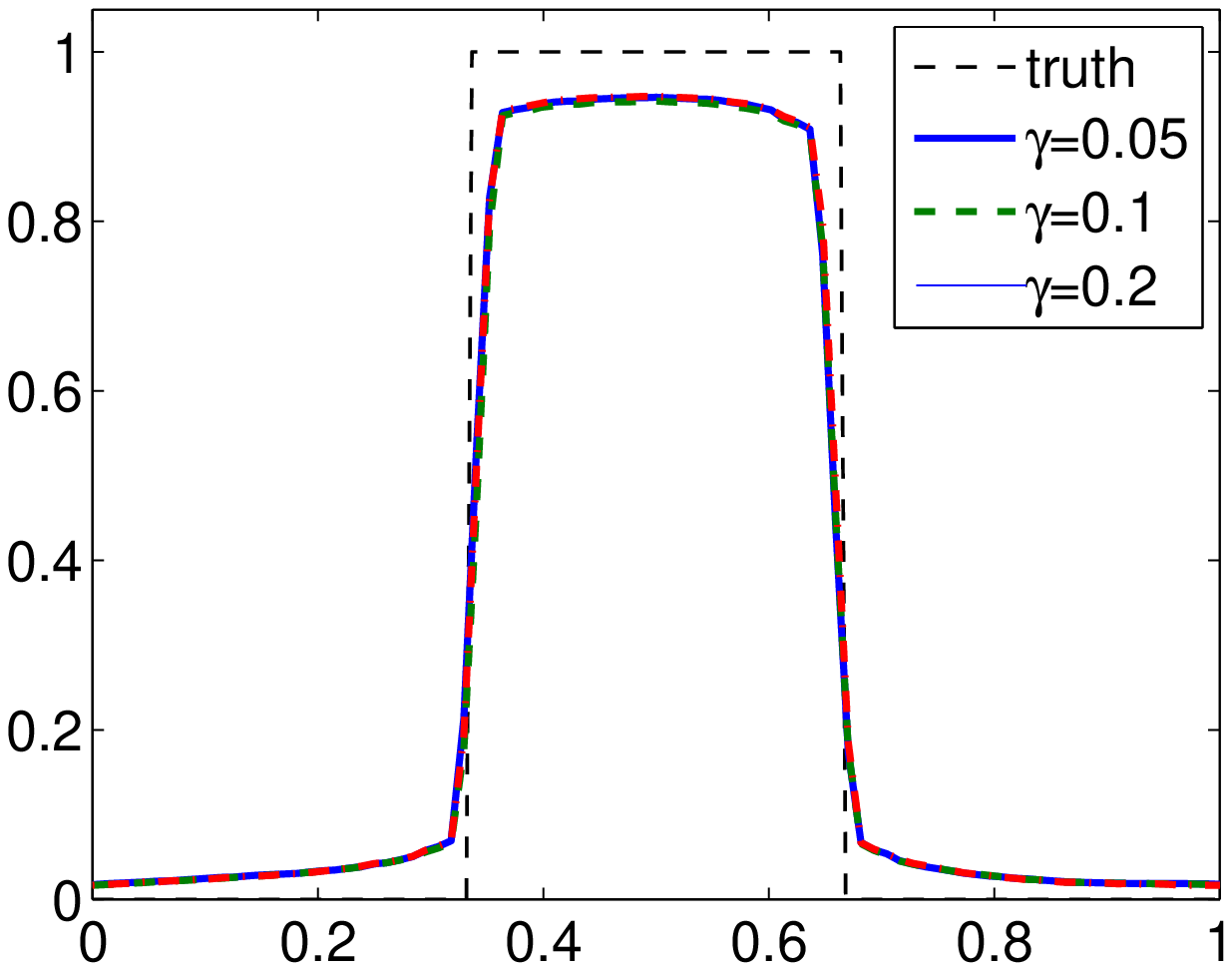}} 
\caption{The posterior means of the TG prior, computed with various values of $\gamma$ and $\lambda$. Left: the means computed with $\gamma =0.1$ and $\lambda =300,\,500,\,700$.
Right: the means computed with $\lambda=500$ and  $\gamma=0.05,\,0.1,\,0.2$.}\label{fig:hp}
\end{figure}

\subsection{Estimating the Robin coefficients}
Here we consider the one dimensional heat conduction equation in the region $x\in [0,L]$,
\begin{subequations}
\label{e:heat}
\begin{align}
&\frac{\partial u}{\partial t}(x,t) = \frac{\partial^2 u}{\partial x^2}(x,t), \\
&u(x,0)=g(x), 
\end{align}
with the following Robin boundary conditions:
\begin{align}
&-\frac{\partial u}{\partial x}(0,t) + \rho(t) u(0,t) = h_0(t),\\
 &-\frac{\partial u}{\partial x}(L,t) + \rho(t) u(L,t) = h_1(t).
\end{align}
\end{subequations}
Suppose the functions $g(x)$,  $h_0(t)$ and $h_1(t)$ are all known, and we want to estimate the unknown Robin coefficient $\rho(t)$
from certain measurements of the temperature $u(x,t)$. 
The Robin coefficient $\rho(t)$ characterizes thermal properties of the conductive medium on the interface 
which in turn provides information on certain physical processes near the boundary, 
and such problems have been extensively studied in the inverse problem literature, e.g, ~\cite{jin2012numerical,yang2009identification}. 
 
\begin{figure}
\centerline{\includegraphics[width=.5\textwidth]{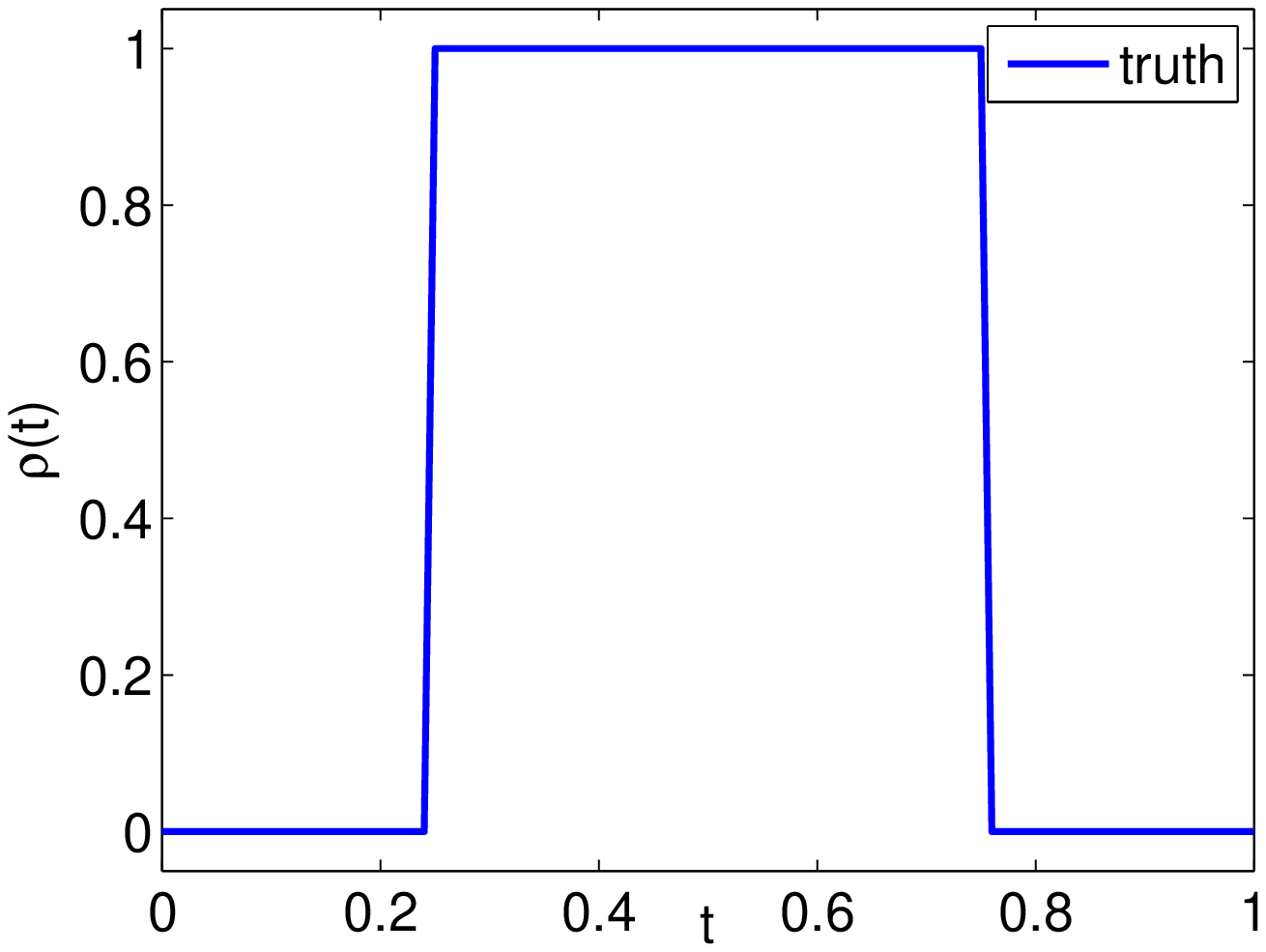}\includegraphics[width=.5\textwidth]{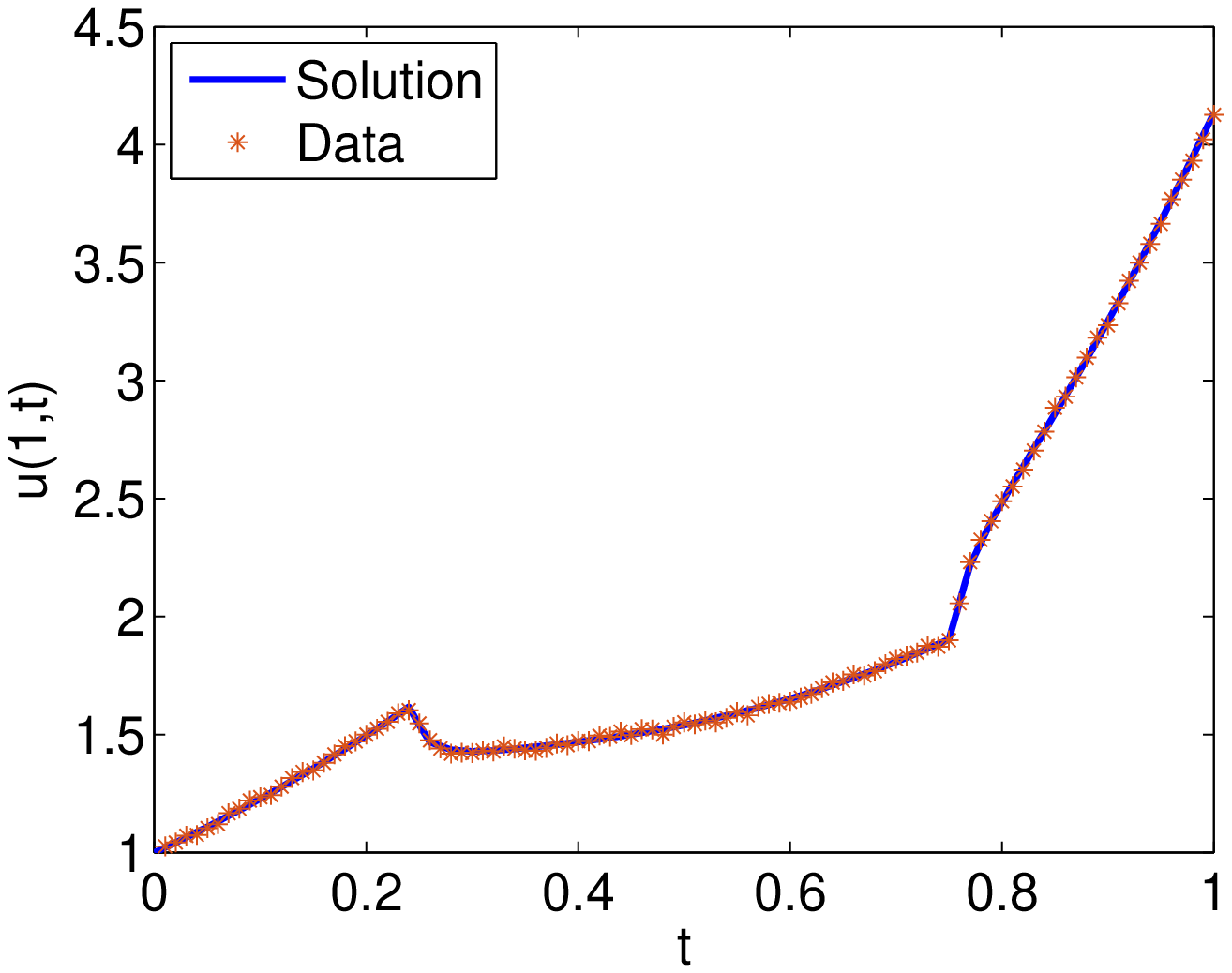}}
\caption{Left: the true Robin coefficient. Right: the true solution at $x=1$ (solid line) and the observed data points (crosses).}\label{fig:robin}
\end{figure}

\begin{figure}
\centerline{\includegraphics[width=.5\textwidth]{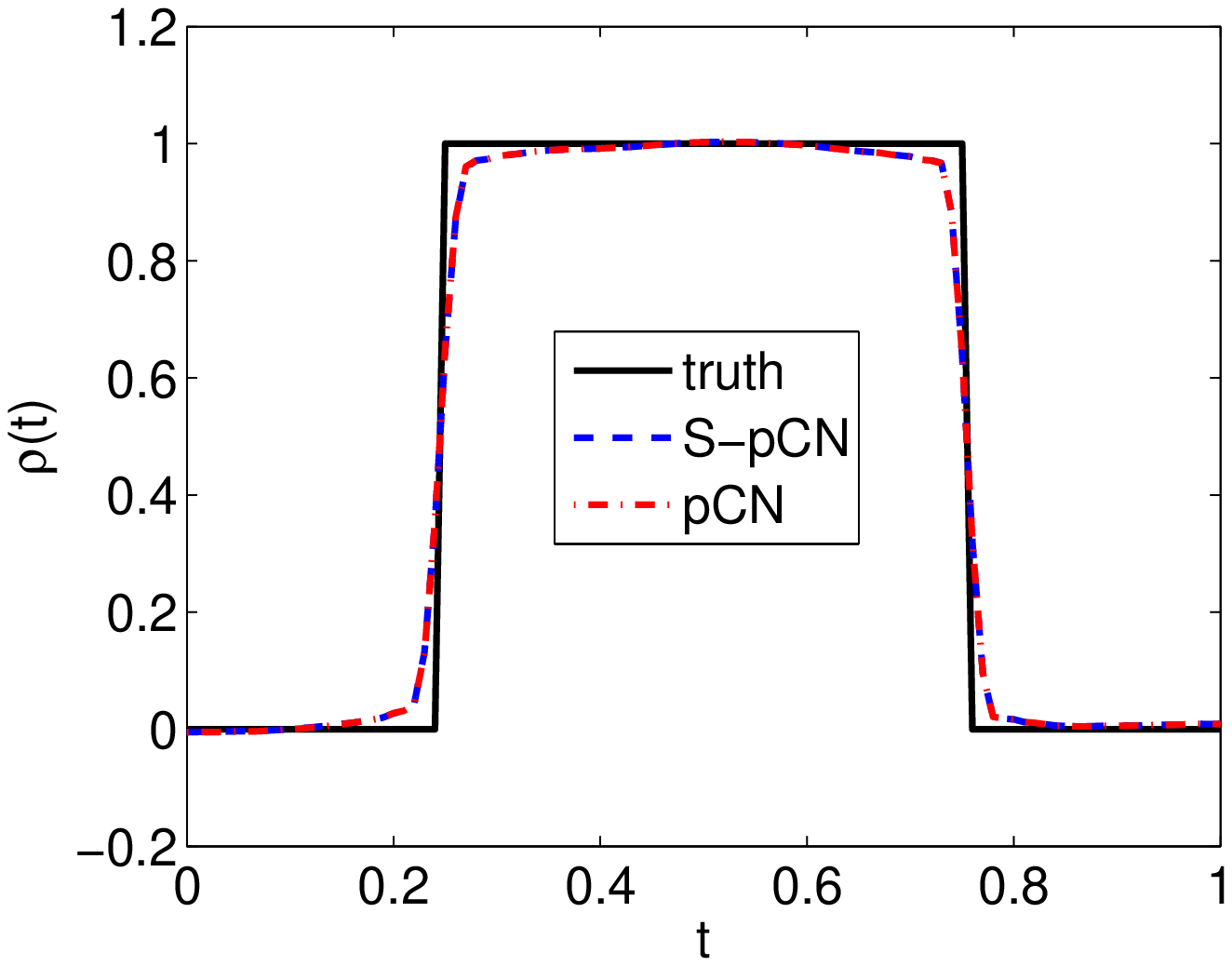}\includegraphics[width=.5\textwidth]{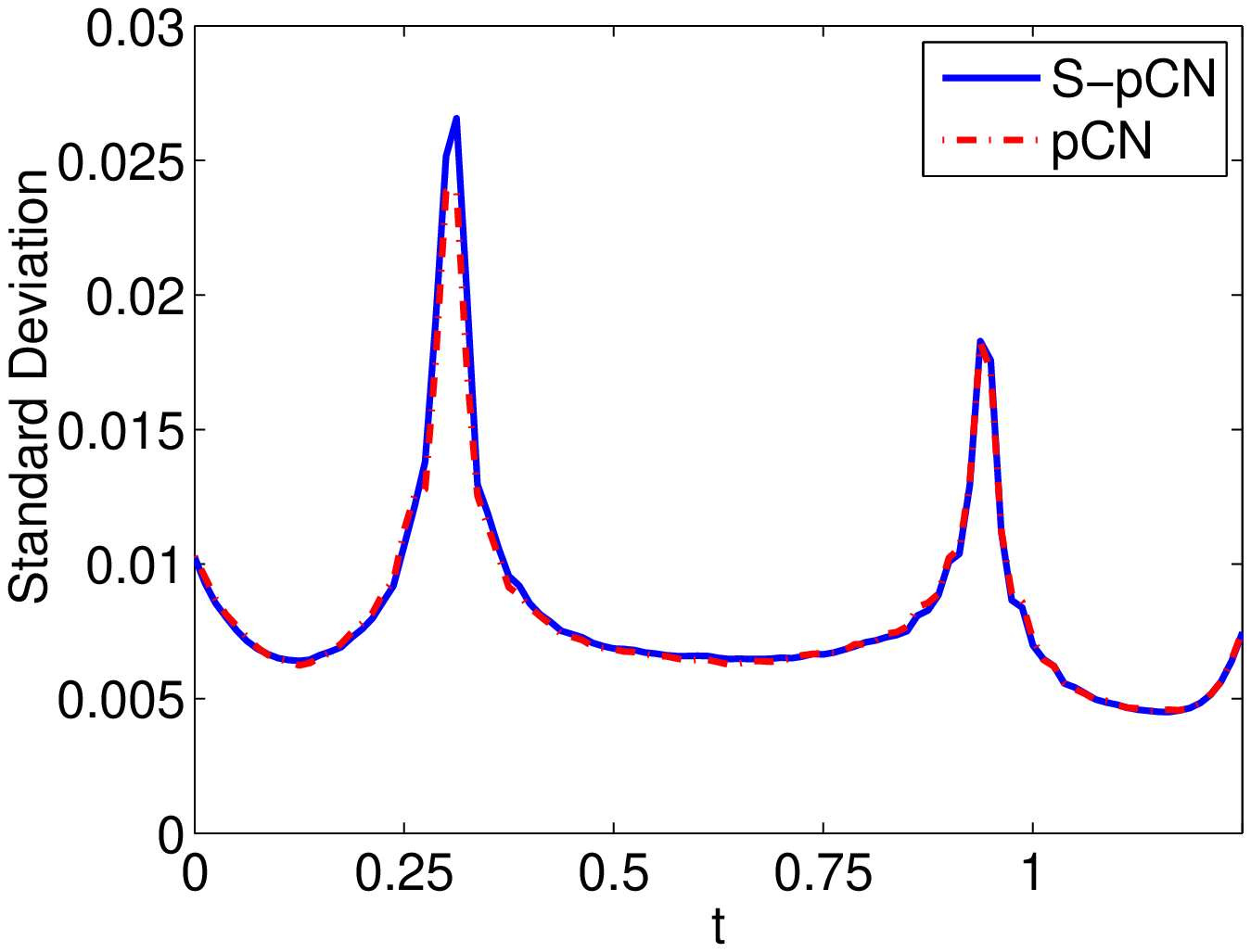}}
\caption{Left: the posterior mean computed by the S-pCN (dashed line) and by the standard pCN (dashed-dotted line), compared to the truth (solid line). Right: the posterior standard deviation computed with S-pCN (dashed) and with pCN (dashed-dotted).}\label{fig:mean-var}
\end{figure}
To be specific, we assume that a temperature sensor is placed at one boundary $x=L$ of the medium, 
and we want to estimate $\rho(t)$ in the time interval $[0,\,T]$ from temperature records measured by the sensor during $[0,\,T]$. 
The resulting forward operator $G$ satisfies the assumptions A.1, which can be derived from the theoretical results provided in \cite{jin2012numerical}.
In this example we choose $L=1$, $T=1$ and the functions to be
\[g(x)=x^2+1,\quad  h_0 =t(2t+1),\quad h_1=2+t(2t+2).\]
 Moreover, the temperature is measured 100 times (equally spaced) in the given time interval and
the error in each measurement is assumed to be an independent zero-mean Gaussian random  variable  with  variance  $0.01^2$. 
The ``true'' Robin coefficient is a piece-wise constant function shown in Fig.~\ref{fig:robin} (left), 
and the data is generated by substituting the true Robin coefficient into Eqs.~\eqref{e:heat}, solving the equation and adding noise 
to the resulting solution, where both the clean solution and the noisy data are shown in Fig.~\ref{fig:robin}~(right).  

We now perform the Bayesian inference with the TG prior. In particular, we choose $\lambda=300$ in the TV term, 
and for the Gaussian measure, the covariance is again given by Eq.~\eqref{e:cov} with $\gamma =0.1$ and $d=0.02$. 
Moreover, the number of grid points is taken to be $200$ in this example, and the equation~\eqref{e:heat} is solved with the finite difference scheme used in \cite{yang2009identification}.  
We sample the posterior with the standard pCN and our S-pCN algorithms, and with either method, we draw $10^{6}$ samples
from the posterior with additional $0.5\times 10^{6}$ samples are used in the burn-in period. 
We set $\beta = 0.02$ in both algorithms,  and in the S-pCN method we choose $k=10$, i.e. 10 iterations being performed in the first stage. 
As a result the average acceptance rate in the pCN scheme is about $15\%$
and that in the S-pCN scheme is about $40\%$. 

In Fig.~\ref{fig:mean-var}, we show the posterior mean (left) and standard deviation (right) computed by both methods, 
and we can see that the results of both methods agree rather well with each other, suggesting that both methods can correctly generate samples from the 
posterior distribution. 
We want to note that the posterior mean resulting from the TG prior can reasonably detect the jumps in the Robin coefficient; 
we have not optimized the hyperparameters in the prior and  the results may be further improved if we do so. 
We now compare the efficiency performance of the two methods. 
First, we show in Fig.~\ref{fig:trace} the trace plots of the two methods for the unknown $\rho(t)$ at $t=0.2,\,0.5$ and $0.8$.
The trace plots provide a simple way to examine the convergence behavior of the MCMC algorithm. Long-term trends (i.e., low mixing rate) in the plot indicate that successive iterations are highly correlated and that the series of iterations have not converged. 
In this regard, the trace plots indicate that the chains produced by the S-pCN method mix much faster than
those by the standard pCN. 
Next we examine the autocorrelation functions (ACF) of the chains generated by both methods.
Once again we consider the points at  $t=0.2,\,0.5$ and $0.8$ and we plot the ACF for all the three points in Fig.~\ref{fig:acf}. 
One can see from the figure that, for all three points, the ACF of the chain generated by the S-pCN decreases much faster than that of the standard pCN,
suggesting that the S-pCN method achieves a significantly better performance.
Alternatively, we look at the ACF of lag $100$ at all the grid points, which is plotted in Fig.~\ref{fig:ess-lag100} (left),  
and we can see that, the ACF of the chain generated by the S-pCN is much lower than that of the standard pCN. 
The effective sample size (ESS) is another common measure of the sampling efficiency of MCMC~\cite{Kass1998}. 
ESS is computed 
by \[\mathrm{ESS} = \frac{N}{1+2\tau},\]
where $\tau$ is the integrated autocorrelation time and $N$ is the total sample size, and it gives an estimate of the number of effectively independent draws in the chain.
We computed the ESS of the unknown $\rho$ at each grid point and show the results in Fig.~\ref{fig:ess-lag100} (right).
The results show that the S-pCN algorithm on average produces around 5 times more effectively independent samples than the standard pCN.

\begin{figure}
\centerline{\includegraphics[width=1.0\textwidth]{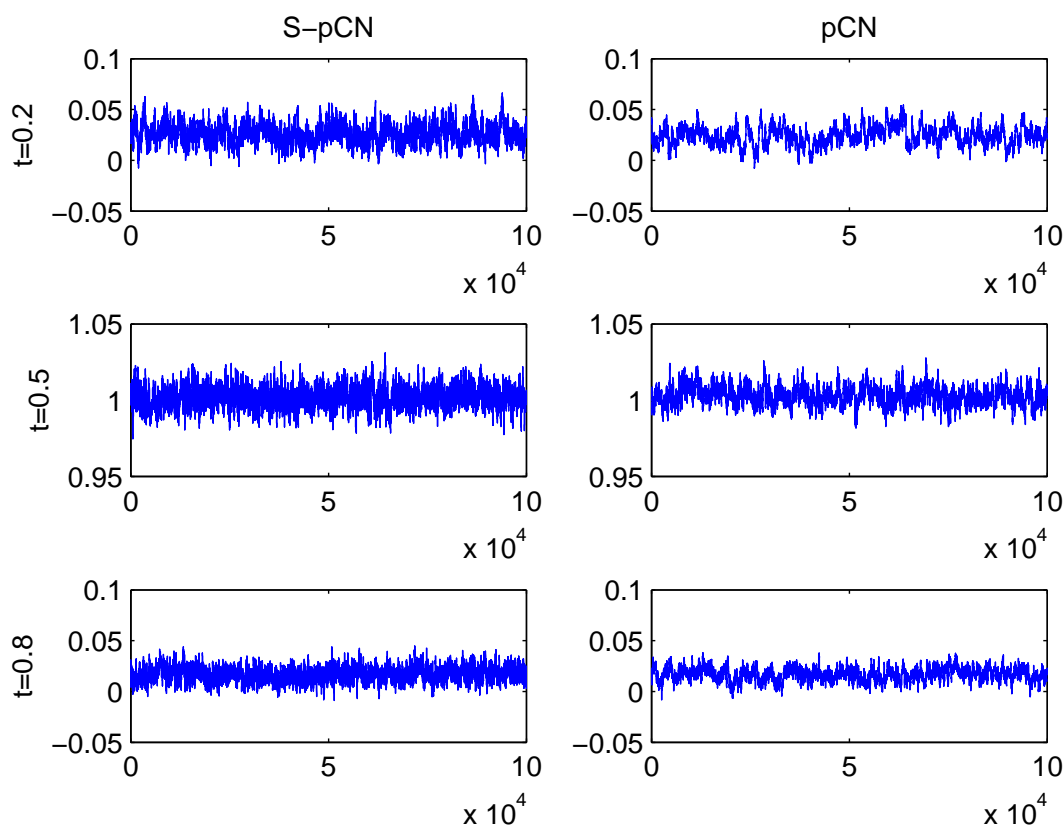}}
\caption{The trace plots for the points at $t=0.2,\,0.5$, and $0.8$ with the S-pCN (left) and the pCN (right) methods.}\label{fig:trace}
\end{figure}


\begin{figure}
\centerline{\includegraphics[width=1\textwidth]{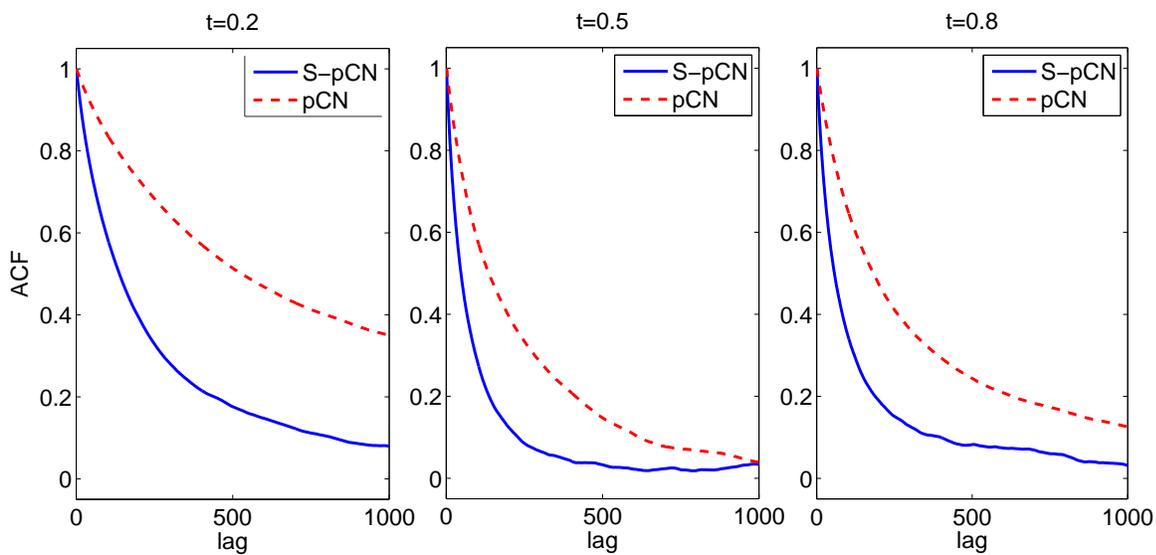}}
\caption{The ACF of $\rho$ at $t=0.2,\,0.5,\,0.8$.}\label{fig:acf}
\end{figure}

\begin{figure}
\centerline{\includegraphics[width=.5\textwidth]{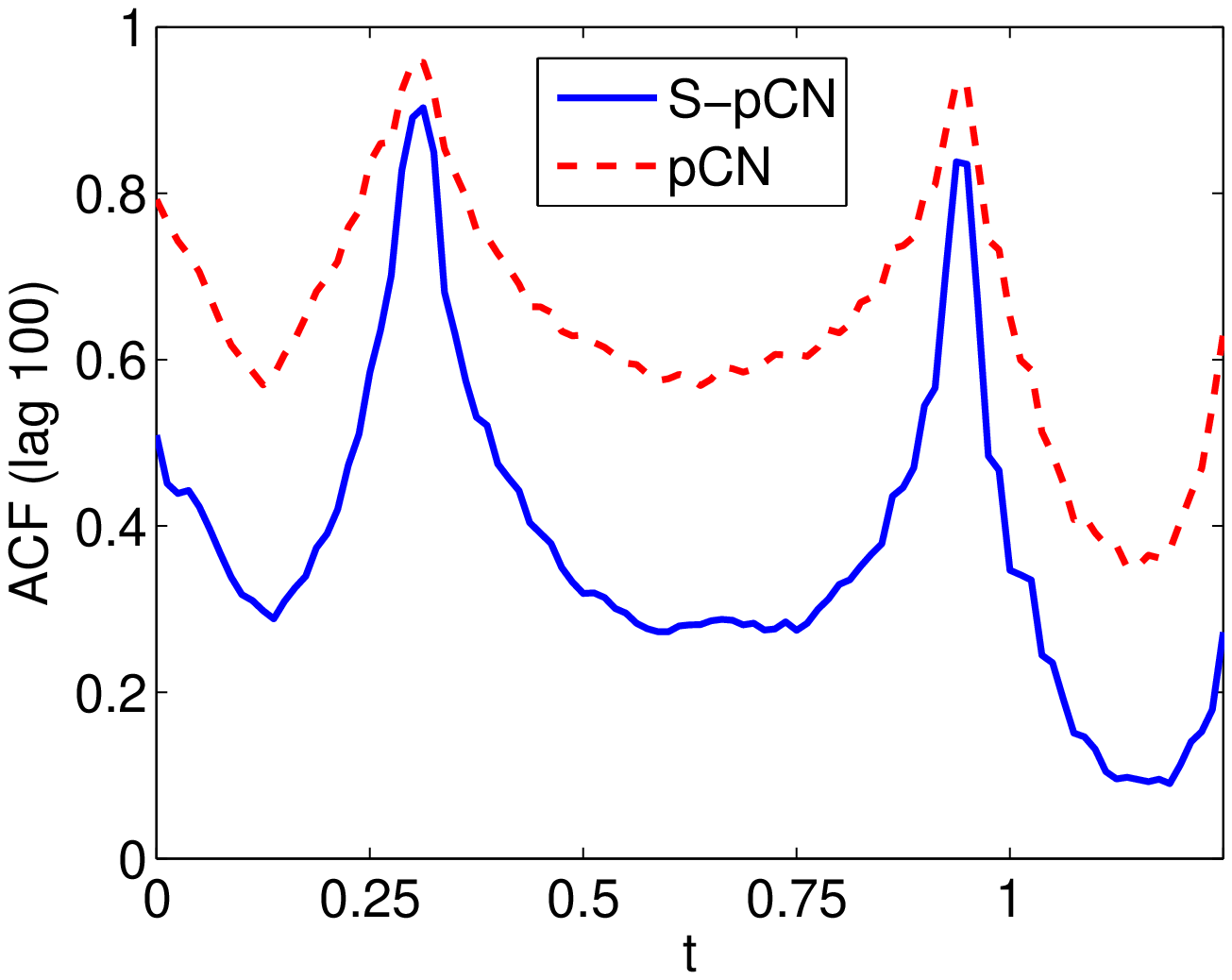}\includegraphics[width=.5\textwidth]{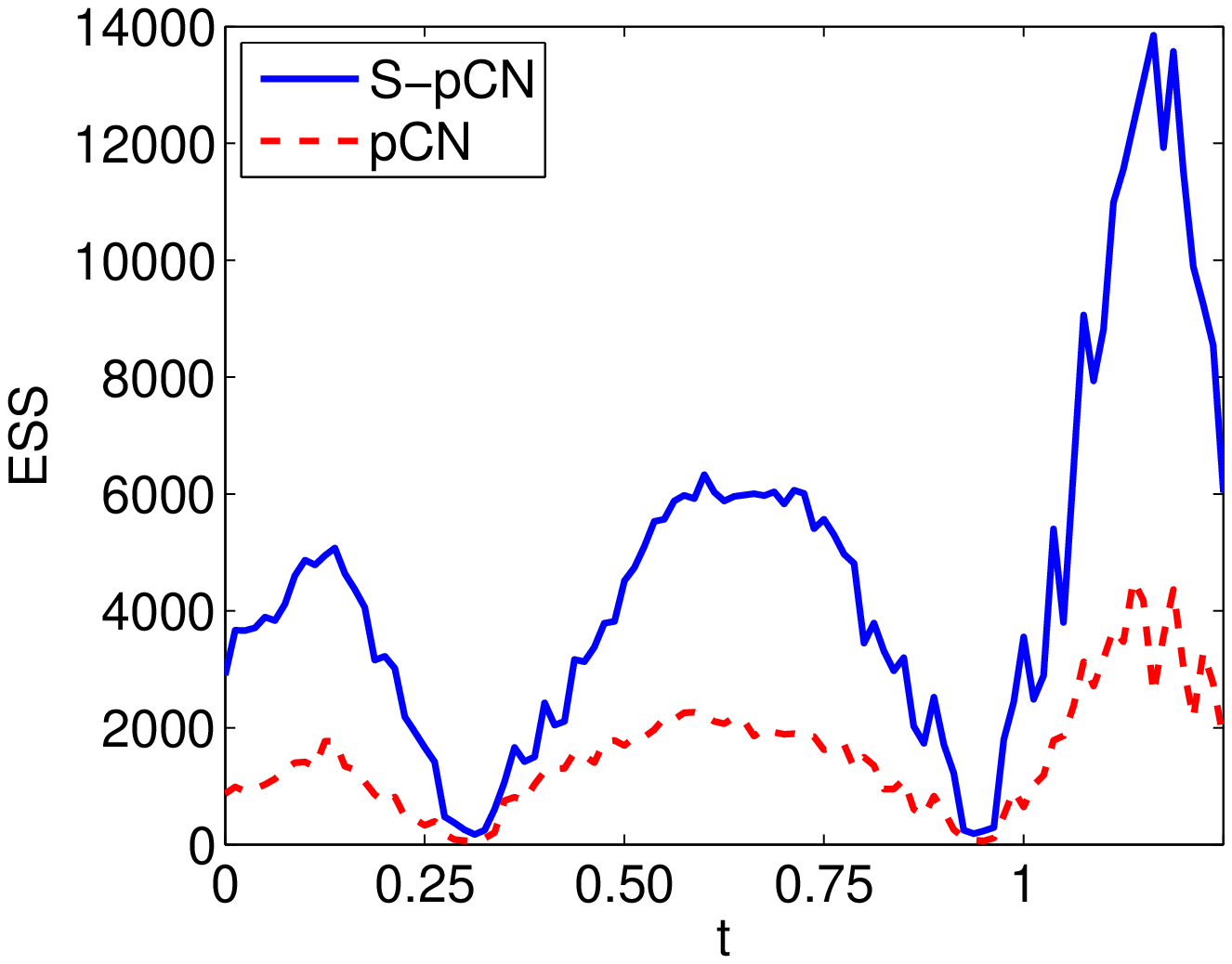}}
\caption{Left: the ACF of lag 100, at each grid point. Right: the ESS at each grid point.}\label{fig:ess-lag100}
\end{figure}

\section{Conclusions}\label{sec:conclusions}
In summary, we have presented a TV-Gaussian prior for infinite dimensional Bayesian inverse problems.
We use the TV term to improve the ability to detect jumps and use the Gaussian reference measure to ensure 
that it results in a well defined the posterior measure.  
Moreover, we show that the resulting posterior distributions  depend continuously on data and more importantly
can be well approximated by finite dimensional representations. 
We also present the S-pCN algorithm which can significantly improve the sampling efficiency by 
simply splitting the standard pCN iterations into two stages.
Finally we provide some numerical examples to demonstrate the performance of the TG prior and the efficiency of the S-pCN algorithm. 
We believe the proposed TG prior can be useful in many practical inverse problems involving functions with sharp jumps.

Several extension of the present work are possible. 
First, it is interesting to consider the connection between the proposed TG prior and the EN type regularization in deterministic inverse problems.
As is shown, the MAP estimator of the TG prior yields the solution of the deterministic inverse problem with an EN type of regularization.
We believe such a connection can provide some interesting theoretical results of the TG prior, the investigation of which is of our interest. 
Secondly, it should also be noted that, throughout the work we assume the regularization parameter $\lambda$ is given, while in practice,
determining $\lambda$ can be a highly nontrivial task. 
A simply solution here is to determine $\lambda$ with the techniques used in the deterministic setting, and then use the result directly in the Bayesian inference. Nevertheless, developing rigorous and effective methods to determine the regularization parameter in the Bayesian setting is a problem of significance. 
A possible solution is to impose a prior on $\lambda$ and formulate a hierarchical Bayes or empirical Bayes problem to estimate both $\lambda$ and $u$.
Finally, a very natural extension of the present work is to consider other choices of regularization term $R$, to reflect different prior information. 
We plan to investigate these issues in the future. 
\section*{Acknowledgment}
 We wish to thank the anonymous reviewers for their very helpful suggestions. 
The work was partially supported by NSFC under grant number 11301337. 
ZY and ZH contribute equally to the work.

\section*{References}
\bibliographystyle{plain}
\bibliography{prior}

\end{document}